%


\documentclass[12pt,onecolumn, draftclsnofoot]{IEEEtran}
\usepackage{url}
\usepackage{array}
\usepackage{makecell}
\usepackage[usenames,dvipsnames]{pstricks}
\usepackage{epsfig}
\usepackage{pst-grad} 
\usepackage{pst-plot} 
\usepackage{amsmath,epsfig}
\usepackage{amssymb}
\usepackage{mathtools}

\usepackage{enumitem}
\usepackage{psfrag}
\usepackage{color}
\usepackage{pstricks}
\usepackage{cite}
\usepackage{algorithm}
\usepackage{epsfig}
\usepackage{amsthm}
\usepackage{amssymb}
\usepackage{psfrag}
\usepackage{pstricks}
\usepackage{float}
\usepackage{stfloats}
\usepackage{wrapfig}
\usepackage{graphicx}
\usepackage{amsmath}
\usepackage{amsfonts}
\usepackage{amssymb}
\usepackage{algorithm}
\usepackage{blindtext}
\usepackage{pstricks,pst-plot}
\usepackage{pst-bezier}
\usepackage{pst-math}
\usepackage{algorithm,algcompatible,amsmath}
\usepackage{hyperref}
\usepackage{float}
\usepackage{cite}
\usepackage{amsmath,amssymb,amsfonts,dsfont}
\usepackage{scalerel,stackengine}
\stackMath
\newcommand\reallywidehat[1]{%
\savestack{\tmpbox}{\stretchto{%
  \scaleto{%
    \scalerel*[\widthof{\ensuremath{#1}}]{\kern-.6pt\bigwedge\kern-.6pt}%
    {\rule[-\textheight/2]{1ex}{\textheight}}
  }{\textheight}%
}{0.5ex}}%
\stackon[1pt]{#1}{\tmpbox}%
}
\parskip 1ex
\usepackage{algorithm}
\usepackage[english]{babel}
\usepackage[utf8]{inputenc}
\usepackage[noend]{algpseudocode}
\usepackage{graphicx}
\usepackage{textcomp}
\usepackage{setspace}
\usepackage [english]{babel}
\usepackage [autostyle, english = american]{csquotes}
\usepackage[dvipsnames]{xcolor}
\colorlet{LightRubineRed}{RubineRed!70!}
\colorlet{Mycolor1}{green!10!orange!90!}
\definecolor{Mycolor2}{HTML}{000000}
\definecolor{Mycolor3}{HTML}{000000} 
\usepackage{tikz}

\algnewcommand\REQUIRED{\item[\textbf{Required:}]}%
\algnewcommand\INPUT{\item[\textbf{Input:}]}%
\algnewcommand\OUTPUT{\item[\textbf{Output:}]}%

\def\bg{{\bf g}}

\def\bc{{\bf c}}

\def\bo{{\bf o}}

\def\bm{{\bf m}}

\def\bs{{\bf s}}

\def\bx{{\bf x}}

\def\rs{{\mathsf{s}}}

\def\ro{{\mathsf{o}}}

\def\rx{{\mathsf{x}}}
\def\rc{{\mathsf{c}}}
\def\rm{{\mathsf{m}}}
\def\rtr{{\mathsf{tr}}}

        \newtheorem{theorem}{Theorem}

        \newtheorem{definition}[theorem]{Definition}

        \newtheorem{lemma}[theorem]{Lemma}





\title{\LARGE
Task-Oriented Communication Design at Scale \\ 
\thanks{The authors are with the Centre for Security Reliability and Trust, University of Luxembourg, Luxembourg. Emails: \{arsham.mostaani, thang.vu,  hamed.habibi, symeon.chatzinotas, bjorn.ottersten\}@uni.lu}
\thanks{This work is supported by European Research Council (ERC) advanced grant 2022 (Grant agreement ID: 742648).}}
 \author{\IEEEauthorblockN{Arsham Mostaani, \IEEEmembership{Student Member,~IEEE},
 Thang X. Vu, \IEEEmembership{Senior Member,~IEEE}, \\ Hamed Habibi, \IEEEmembership{Member,~IEEE},
 Symeon Chatzinotas, \IEEEmembership{Fellow Member,~IEEE}, \\ and Bj\"orn Ottersten, \IEEEmembership{Fellow Member,~IEEE} }
}
\begin{document}
\maketitle

\vspace{-18mm}
\begin{abstract}
\vspace{-4mm}
With countless promising applications in various domains such as IoT and industry 4.0, task-oriented communication design (TOCD) is getting accelerated attention from the research community. This paper presents a novel approach for designing scalable task-oriented quantization and communications in cooperative multi-agent systems (MAS). The proposed approach utilizes the TOCD framework and the value of information (VoI) concept to enable efficient communication of quantized observations among agents while maximizing the average return performance of the MAS, a parameter that quantifies the MAS's task effectiveness. The computational complexity of learning the VoI, however, grows exponentially with the number of agents. Thus, we propose a three-step framework: i) learning the VoI (using reinforcement learning (RL)) for a two-agent system, ii) designing the quantization policy for an $N$-agent MAS using the learned VoI for a range of bit-budgets and, (iii) learning the agents' control policies using RL while following the designed quantization policies in the earlier step. 
Our analytical results show the applicability of the proposed framework under a wide range of problems. Numerical results show striking improvements in reducing the computational complexity of obtaining VoI needed for the TOCD in a MAS problem without compromising the average return performance of the MAS.
\end{abstract}

\vspace{-10mm}
\begin{IEEEkeywords}
 Task-oriented data compression, communication for machine learning, joint communication and control, multi-agent systems, reinforcement learning.
\end{IEEEkeywords}

\vspace{-3mm}
\section{Introduction} \label{sec: Introduction}
\vspace{-3mm}


Be it the communication of data between distinct agents, or the transmission of information/signals inside a neural network (NN), communication and information exchange have always been an inseparable part of every data-driven learning system. For decades, the role of communication inside an AI system has been less investigated; often resulting in AI systems where communications are assumed to be perfect, e.g., the perfect communication of signals inside a NN. Nevertheless, communication is an integral part of AI, directly influencing its efficiency and accuracy. This is especially true when we view communications with its modern definitions steaming from the concept of task-oriented communications. In particular, with the rise of task-oriented communications \cite{mostaani2022survey,gunduz2022beyond}, there is a wide consensus about the diverse value of every bit sequence for a specific task \cite{mostaani2022task,soleymani2021value,howard1966information}. When the receiving end of communications intends to carry out a learning task, some bits in a sequence of the received communications might prove more useful. Under these circumstances, an effective design of communications can (i) significantly reduce the complexity of computations at the receiving end, (ii) improve the accuracy of the learning task under limited computational resources or equivalently, improve response time/latency of the receiving end in carrying out its computations, (iii) reduce the power consumed for communications and, (iv) reduce the required communication rate to overcome channel rate-constraints or network resource limitations \cite{mostaani2022task,liu2020edgefed}. While most of these benefits have a direct effect on the complexity as well as the accuracy of the AI system operating at the receiving end, they are also considered to be different aspects of the mission that task-oriented communications considers for itself - making it an integral part of AI.

Rethinking communications by understanding the value of communication bits can result in fundamental changes in the building blocks of (distributed) machine learning. While the literature of communications research has numerous examples of how machine learning can be leveraged to solve various communication problems \cite{8755300}, the contribution of communications in optimizing the (distributed) learning systems, which happens to be the main focus of this manuscript, has only recently started to receive attention from the research community \cite{ lin2016fixed,yang2020energy,jacob2018quantization,mostaani2022survey,tung2021effective,mota2021emergence,shlezinger2021deep,gutierrez2022learning,zhang2022goal,shlezinger2020task,mostaani2019Learning,mostaani2022task}.
In \cite{lin2016fixed,yang2020energy}, the authors investigate the effect of quantization channels for the transmission of signals from one neuron to another, inside different types of NNs. Federated learning over rate-reduced communication channels is investigated by \cite{jacob2018quantization}, resulting in reduced energy consumption for the whole distributed learning system. Direct task-oriented data quantization for an estimation task is introduced in \cite{shlezinger2020task}. Direct task-oriented communications for a user scheduling task is introduced by \cite{mota2021emergence}, achieving superior goodput performance. Indirect \footnote{By an indirect algorithm here we mean an approach that is not dependent on our knowledge of a particular task. Indirect approaches are applicable to any/(wide range of) tasks. In contrast to indirect schemes, we have direct schemes that are specifically designed for a niche application  \cite{shlezinger2020task}. As defined by \cite{mostaani2022survey}: "the direct schemes aim at guaranteeing or improving the performance of the cyber-physical system at a particular task by designing a task-tailored communication strategy".} design of communications for a classification task is carried out by \cite{shao2023edgeinference,xu2022classification} and for a variety of control tasks by \cite{mostaani2019Learning, tung2021effective, mostaani2022task,mostaani2022centralized}, with \cite{tung2021effective,mostaani2022centralized} being focused on star typologies for the communication network of the agents and \cite{mostaani2019Learning,mostaani2022task} on full mesh networks - Fig. \ref{fig: communication topology}.

 \begin{figure}[!t] 
  \centering 
      \includegraphics[width=0.6\textwidth]{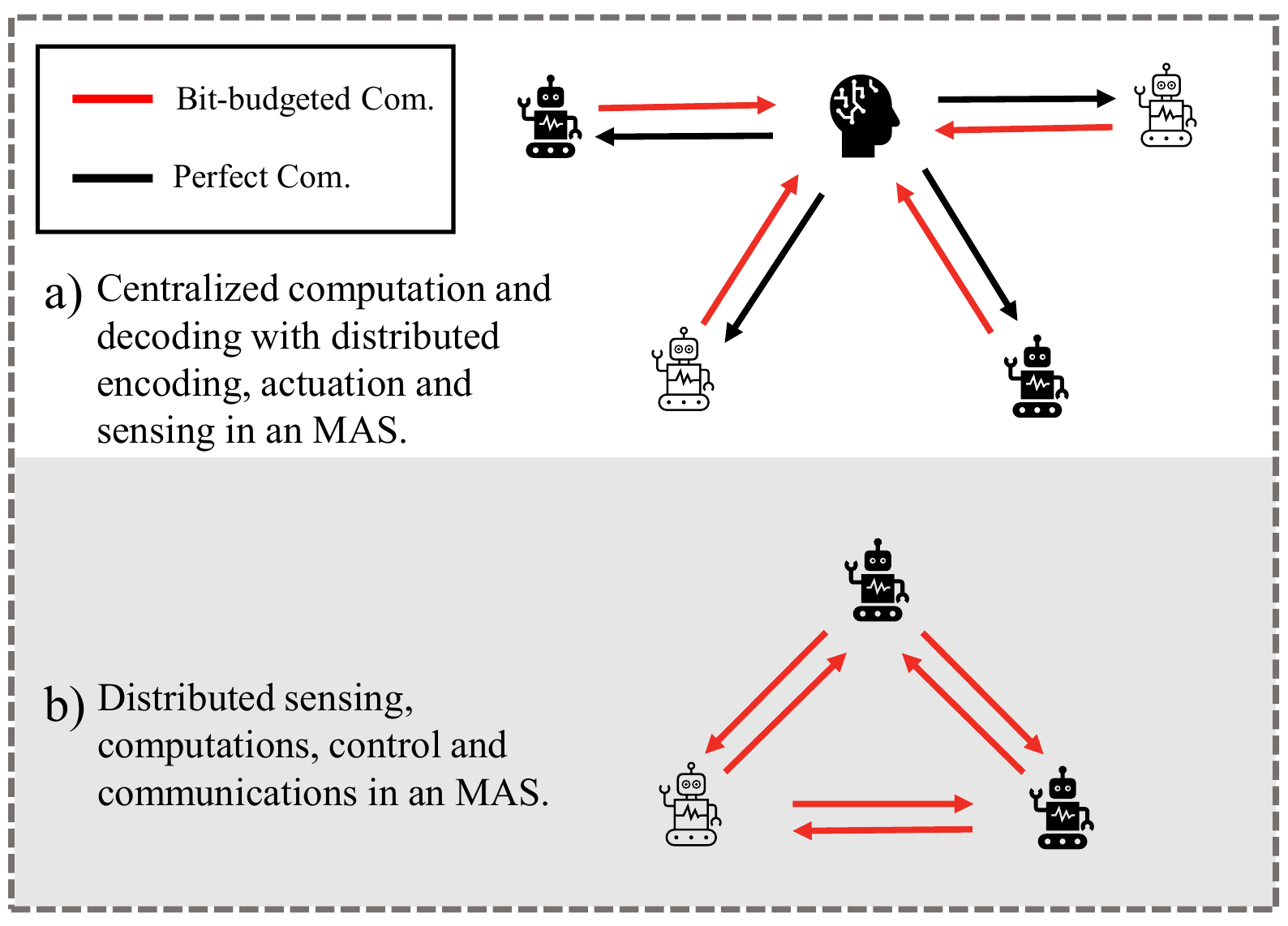} 
      \vspace{-4mm}
  \caption{ The communication network topology assumed in \cite{tung2021effective} vs. the adopted communication network topology in the current paper and in \cite{mostaani2022task}.}
  \label{fig: communication topology}
  \vspace{-0.5cm}
\end{figure}

By introducing the Dec-POMDPs \cite{oliehoek2008optimal}, recently, there has been a shift towards the joint design of communications and control \cite{tung2021effective}. However, we believe there is, yet, a huge potential in the disentanglement of the two problems, resulting in the introduction of task-effective communication design problems \cite{mostaani2022task,stavrou2022rate}\footnote{According to \cite{mostaani2022task}, task-effective communication problem is different from a traditional communication problem in that it captures some important features of the control task. }. In particular, supported by theoretical reasons, we decompose the joint problem into three separate but interdependent problems.  Decomposition of the two problems has a multitude of advantages: (i) it drastically reduces the complexity of the original joint problem - from a double exponential time problem \cite{bernstein2002complexity} to two exponential time and one quadratic time problems; (ii) it allows evaluating of the performance of our control and communication solutions in isolation \cite{mostaani2022centralized,lowe2019pitfalls}; (iii) it allows formulating a larger set of problems - in which agents can also communicate in an instantaneous fashion \cite{pynadath2002communicative} - as in the joint problem a delay in communications is inevitable \cite{oliehoek2008optimal}; (iv) solving the joint problem is oblivious to the inefficiencies of the communication solution, as we ultimately measure the effectiveness of the whole system according to the average task's cost/reward obtained by the joint communication and control solution. As per \cite{lowe2019pitfalls}, we can obtain a desirable performance in the task while the communications are not effective yet. Further, as \cite{mostaani2022centralized} suggests, and is shown in Fig. \ref{fig: motivating figure - separation}, the achievable average return of the system can be improved by increasing the memory of the receiving end's controller. However, it leads to much higher complexity for the controller to select suitable actions.
In fact, in the Dec-POMDP framework, obtaining effective distributed joint communication and control policies relies on processing the history of observations \cite{gronauer2022multi,oliehoek2008optimal}, with the complexity of the distributed policies growing exponentially\footnote{ "A DEC-POMDP
together with a joint policy can be viewed as a POMDP together with a policy, where the observations in the POMDP
correspond to the observation tuples in the DEC-POMDP.
In exponential time, each of the exponentially many possible sequences of observations can be converted into belief
states. The transition probabilities and expected rewards
for the corresponding "belief MDP" can be computed in
exponential time (Kaelbling et al., 1998)" \cite{bernstein2002complexity}.} with respect to the size of observation histories. Therefore, Dec-POMDP approaches can result in near-optimal control policies at the cost of expensive computational complexities \cite{gronauer2022multi}. The high cost of computations stems from the ineffectiveness of inter-agent communications that makes the decision-making more dependent on the larger history of observations.

 \begin{figure}[t!] 
  \centering 
      \includegraphics[width=0.6 \textwidth]{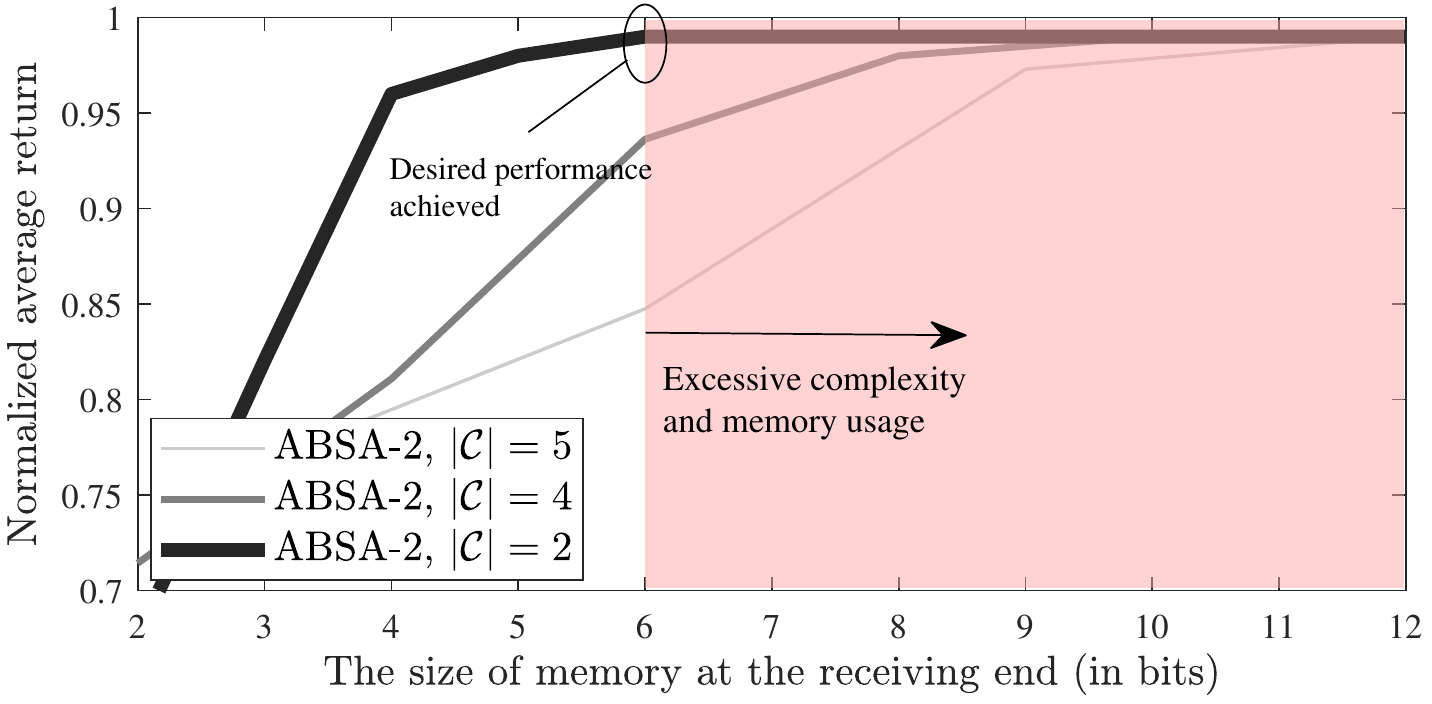} 
      \vspace{-4mm}
  \caption{Joint design of communications and control can potentially lead to inefficient communication policies whose weakness is compensated in the controller at the cost of radical increase in the complexity its running algorithms. The three curves shown in the figure, demonstrate the performance of Action-Based State Aggregation (ABSA) introduced in \cite{mostaani2022centralized}, at three different sizes of the quantization codebook $|\mathcal{C}|$. When the controller does not have access to the state information, regardless of the method used to design the communications to it, by increasing the memory of the controller, we can increase the average return performance of the system. Although the desired performance can be achieved by increasing the size of memory at the receiving end, this comes at the cost of a significant increase in the complexity of decision making at the receiver.}
  \label{fig: motivating figure - separation}
  \vspace{-0.4cm}
\end{figure}

The paper \cite{mostaani2022task}, is one of the first recent efforts to separate the data quantization and control policies. In contrast with the classic quantization problems \cite{1056142} where the goal is to minimize the distortion between the original signal and its quantized version, in the task-oriented/semantic quantization problem, the goal is to minimize the distortion between the task-relevant/semantic information available in the original signal v.s. the task-relevant/semantic information available in quantized signal \cite{stavrou2022rate,mostaani2022task}. The analysis provided in \cite{stavrou2022rate} and similar works \cite{liu2022indirect,poor21semantic} are not specific to a certain function that can capture the semantic/task-relevant data inside a given signal, whereas in \cite{mostaani2022task} the authors introduce a particular and indirect measure that can evaluate usefulness/relevance of an observation data for control tasks. The introduced measure, being referred to as value function in dynamic programming and reinforcement learning, is shown to be able to measure the importance of local observation data for a generic multi-agent control task over Markov decision processes (MDP)s \cite{mostaani2022task}. The complexity of computing the value function, however, is multiplied by the size of action-observation space with the addition of every agent to the system - making the value function extremely expensive to compute for multi-agent systems (MAS)s with large number of agents \cite{azar2011speedy,mostaani2022task}\footnote{Since the computational complexity of Q-learning in the centralized training phase is order of $|\Omega^n \times \mathcal{M}^n|$ time complexity \cite{azar2011speedy}, the addition of one single agent will multiply the complexity of the centralized training by $|\Omega \times \mathcal{M}|$.}. In this direction, the authors in \cite{gronauer2022rlsurvey} pronounce that: "since the complexity in the state and action space grows exponentially with the number of agents, even modern deep learning approaches may reach their limits.". 

Due to the importance of the scalability of multi-agent reinforcement learning (MARL) algorithms, the initial efforts to address this issue can be traced back to 1999 \cite{schneider1999distributed,wolpert1999general}, where authors introduce reward/value sharing for local optimizations. Since then, there has been a sustained effort to address the problem by other means such as introducing factored MDPs \cite{gronauer2022rlsurvey} or independent Q-learning \cite{tampuu2017multiagent}. Although the independent Q-learning and similar schemes \cite{tampuu2017multiagent} can scale with the growing number of agents they suffer from sub-optimality caused by the non-stationarity of the environment from each agent's perspective. This issue is addressed by modern MARL algorithms that comprise a centralized training phase, through which the training environment is guaranteed to stay stationary \cite{FoersterCounter}. The complexity of the centralized training phase, however, grows exponentially as the number of agents increases in the MAS. Monotonic Value Function Factorization - QMIX \cite{foerster2020qmix} - enforces a monotonicity constraint on the relationship between the local Q-functions and the centralized Q-function to reduce the complexity of factorizing the value decomposition networks. The overall complexity of MARL, however, increases with the addition of each agent to the system. Even with the use of attention mechanisms for centralized training, it has not been possible to go beyond linearly increasing the complexity of centralized training with respect to the number of agents \cite{iqbal2019actor}.

To the best of our knowledge, the current paper is the first to reduce the complexity of the centralized training phase from exponential time complexity - $\mathcal{O}( c^N)$ with $c>1$ - to constant time complexity - $\mathcal{O}(1)$ - with respect to the number of agents $N$. The only caveat is that our proposed scheme for reducing the complexity of the centralized training phase cannot be applied to every multi-agent learning problem but only to design the inter-agent communications in the multi-agent setting. In particular, the contributions of the paper are as follows.

\begin{itemize}
    \item We provide analytical studies to show that a two-agent centralized training phase is sufficient to draw the insights we need from the centralized training phase - when the reward function and observation structure follow certain conditions. Regardless of the method used in the centralized training to compute the value of observation space e.g., deep reinforcement learning, exact reinforcement learning or dynamic programming, our analytical results stay valid.
    
    \item The proposed analytical studies suggest that the value function obtained from the two-agent centralized training phase is sufficient to cast the task-oriented data quantization problem - even if we do not know the relationship between value function of the two-agent system versus $N$-agent system, $N$ being the real number of agents the system is composed of.
    
    \item According to these results, we propose a scalable state aggregation algorithm for data compression (ESAIC) which can easily be applied to MASs composed of a large number of agents with the aim fulfilling a collaborative task.
    
    \item By carrying out numerical studies on geometrical consensus problem \cite{barel2017come}, will show that the proposed ESAIC is capable of reducing the complexity of the centralized training for hundreds of days - if not years - even in very simple problems, while it maintains the average return performance of the algorithm close to the optimality.
\end{itemize}

\subsection{Organization}
Section II describes the system model for a cooperative multi-agent task with rate-constrained inter-agent communications. Section III provides a quick overview to SAIC, an exiting algorithm that can solve provide a solution to the joint control and data compression policy design problem. Our goal is to make SAIC computationally less complex in this manuscript. Section IV proposes the extended SAIC (ESAIC), a scheme for the joint design of data compression and control policies which is much less complex to run and very similar to SAIC in average return performance. We also provide analytical results on the conditions that ESAIC can maintain the performance of its predecessor. The numerical results and discussions are provided in section V. Finally, section VI concludes the paper.

\begin{table}
\caption{Table of notations}
\vspace{-0.3cm}
\centering
 \begin{tabular}{||c c ||} 
 \hline
 Symbol & Meaning \\ [0.5ex] 
 \hline\hline
 \small{$\bx(t)$} & \small{A generic random variable generated at time $t$}  \\ 
 \hline
 \small{$\rx(t)$} & \small{Realization of $\bx(t)$}  \\
 \hline
 \small{$\mathcal{X}$} & \small{Alphabet of \bx(t)}  \\
 \hline
 \small{$|\mathcal{X}|$} & \small{Cardinality of $\mathcal{X}$}  \\
 \hline
  \small{$\mathbb{P}(\mathcal{X})$} & \small{ Power set of $\mathcal{X}$ }  \\
 \hline
 \small{$p_{\bx}\big(\rx(t)\big)$} & \small{Shorthand for $\mathrm{Pr}\big(\bx(t) = \rx(t) \big)$}  \\  
 \hline
 \small{$H\big(\bx(t)\big)$} & \small{Information entropy of $\bx(t) $ (bits)}  \\  
 \hline
  \small{$\mathcal{X}_{-\bx}$} & \small{ $\mathcal{X} - \{\bx\}$}  \\ [1ex]
  \hline 
   \small{$\mathbb{E}_{p(\rx)}\{\bx\}$} & \small{\makecell{Expectation of the random variable $X$ over the \\ probability distribution $p(\rx)$}}  \\ [1ex]
 \hline
    \small{$\delta(\cdot)$} & \small{Dirac delta function}  \\ [1ex]
 \hline
   \small{$\rtr(t) $} & \small{Realization of the system's trajectory at time $t$}  \\ [1ex]
 \hline
\end{tabular}
\label{table-notation}
\vspace{-6mm}
\end{table}
We also use the concept of image functions in our analytical studies which is defined as the following. Let $g(\cdot) : \mathcal{D} \rightarrow \mathcal{C}$ be a function and $\mathcal{D}' \subset \mathcal{D}$ be a subset of its domain. The image function of $g(\cdot)$ denoted by $\Ddot{g}(\cdot) : \mathbb{P}(\mathcal{D}) \rightarrow \mathbb{P}(\mathcal{C})$ is defined as $ \Ddot{g}(\mathcal{D}') \triangleq \{ c \in \mathcal{C} \; | \; g(d) = c \; , d\in \mathcal{D}' \}$.
For the sake of the simplicity of the analysis, the arguments of the function may be omitted when no confusion is raised, e.g., we have used $r^{[n]}(\cdot)$ instead of $r^{[n]}(\ro_1,...,\ro_n,\rm_1,...,\rm_n)$. The focus of this paper is on the design of the quantizer of a communication pipeline; accordingly, the terms communication design and quantization design are used interchangeably throughout the manuscript.

\vspace{-4mm}
 \section{Problem Statement} \label{System model - Section}
 \vspace{-2mm}
We consider a multiagent system (MAS) in which  multiple agents $i \in \mathcal{N} = \{1,2,..., N\} $ collaboratively and distributedly execute a task. The system runs on discrete time steps $t$. The observation of each agent $i$ at time step $t$ is shown by $\bo_i(t)\in \Omega$ and the state $\bs(t) \in \mathcal{S}$ of the system is defined by the vector of joint observations $\bs(t) \triangleq [ \bo_i(t)]_{i \in \mathcal{N}} \in \Omega^N $. Now let $\bs_i(t) \in \{\Omega \cup {0}\} ^N $ be the vector of agent $i$'s local state, with all its elements being equal to zero except for its $i$'th element which is equal to $\bo_i(t)$. We assume that $\forall i,j \in \mathcal{N} \ $ the local states $\bs_i(t)$ and $\bs_j(t)$ are linearly independent. This is also referred to as joint observability of the state \cite{pynadath2002communicative}. The control action of each agent $i$ at the time $t$ is shown by $\bm_i (t) \in \mathcal{M}$, and the action vector $\bm(t) \in \mathcal{M}^N $ of the MAS is defined by the joint actions $\bm(t) \triangleq \langle \bm_1(t), ..., \bm_N(t)   \rangle $. The observation space $\Omega$, state-space $\mathcal{S}$, and action space $\mathcal{M}$ are all discrete sets.
%
The environment is governed by an underlying Markov Decision Process that is described by the tuple $M = \big{\langle} \mathcal{S},\mathcal{M}^N, r(\cdot), \gamma, T(\cdot)  \big{\rangle}$, where $r(\cdot): \mathcal{S} \times \mathcal{M}^N \rightarrow \mathbb{R}$ is the per-stage reward function and the scalar $ 0\leq\gamma \leq 1 $ is the discount factor. Also, the function $r^{[n]}(\cdot): \Omega^n \rightarrow \mathcal{M}^n$ is the reward function of an MAS comprised of $n$ agents. 
The function $T(\cdot): \mathcal{S}\times \mathcal{M}^N \times \mathcal{S} \rightarrow [0,1]$ is a conditional probability mass function (PMF) which represents state transitions such that $T\big(\rs(t+1), \rm(t), \rs(t)  \big) = \mathrm{Pr}\big( \rs(t+1) | \rs(t) , \rm(t) \big)$. The performance of the MAS is measured according to the system's average return defined as the summation of obtained per-stage rewards within the time horizon $T'$:
{\small\begin{equation} \label{eq: cumulative rewards}
    \bg(t^{'})= {\sum}_{t=t^{'}}^{T'}\gamma^{t-1} r\big(\bs(t),\bm(t)\big).
\end{equation}}
 Once per time step, following the Fig. \ref{fig: Comms model}, agent $i \in \mathcal{N}$ is allowed to transmit a communication vector $\rc_{i}(t)$ to every other agent $j \in \mathcal{N}_{-i} = \mathcal{N}_{-i}$ - following a full mesh topology for connectivity. Conditioned on its observation $\ro_i(t)$, agent $i$ transmits a vector of communication messages $\rc_i(t) = [  \rc_{i,j}(t) ]_{j \in \mathcal{N}_{-i}} \in \prod_{j\in \mathcal{N}_{-i}} \mathcal{C}_{i,j} $, in which the element $\rc_{i,j}(t)$ denotes the message sent by agent $i$ to agent $j$, where $\rc_{i,j}(t)$ is generated following the communication policy $\pi^c_{i,j}(\cdot) : \Omega \rightarrow \mathcal{C}_{i,j}$. The non-empty set $\mathcal{C}_{i,j}$ is an alphabet $\{ \rc_{i,j}, \rc_{i,j}', \rc_{i,j}'', ..., \rc_{i,j}^{(B_{i,j}-1)} \}$ composed of a finite $B_{i,j}$ number of communication code-words - we use the same notation to refer to the different elements of the action, observation and state spaces too. Agent $i$'s communications are generated by following the tuple $\pi^c_i = \langle  \pi^c_{i,j}(\cdot) \rangle_{j \in \mathcal{N}_{-i} } $ which is comprised of $N-1$ different communication policies. Agent $i$'s communications are sent over $N-1$ separate error-free finite-rate bit pipe, with its rate constraint to be $R_{i,j} \in \mathbb{N}$ (bits per channel use) or equivalently (bits per time step) \footnote{The finite rate of the bit pipe - or equivalently the bit-budget $R_{i,j}$ - is used as the metric to create different peer to peer (P2P) scenarios for the communication between any two agents $i, j \in \mathcal{N}$. We can use the same bit-budget value to represent a P2P system that runs under a higher power/higher noise as long as the SNR and the linear code word generator matrix (of the channel coding) are the same see e.g., \cite{viterbi1971convolutional} for more detailed discussions. In fact, the considered bit-budget is equivalent to the achievable rate of the operating modulation and (channel) coding.
}. As a result, the cardinality of the communication symbol space $\mathcal{C}_{i,j}$ for each $i$ to $j$ inter-agent communication link should follow the inequality
{\small \begin{align} \label{eq: bit-budget constraint}
    0 \leq B_{i,j} \leq 2^{R_{i,j}}.
\end{align}}
 In the special case of homogeneous bit-budgets, we have $R_{i,j} = R, \forall i,j \in \mathcal{N}$. Each agent $i$ exploits its observation $\bo_i(t)$ together with the received communication messages $ \tilde{\rc}_i(t) = [  \rc_{j,i}(t) ]_{j \in \mathcal{N}_{-i} } \in \prod_{j\in \mathcal{N}_{-i}} \mathcal{C}_{j,i} $ within time-step $t$ to select the control signal $\rm_i(t)$ following a deterministic control policy $\pi^m_i(\cdot): \prod_{j\in \mathcal{N}_{-i}} \mathcal{C}_{j,i} \times \Omega \rightarrow \mathcal{M}$. Accordingly, the problem we solve is detailed in Definition 1.


 \begin{figure}[!t] 
  \centering 
      \includegraphics[width=0.6\textwidth]{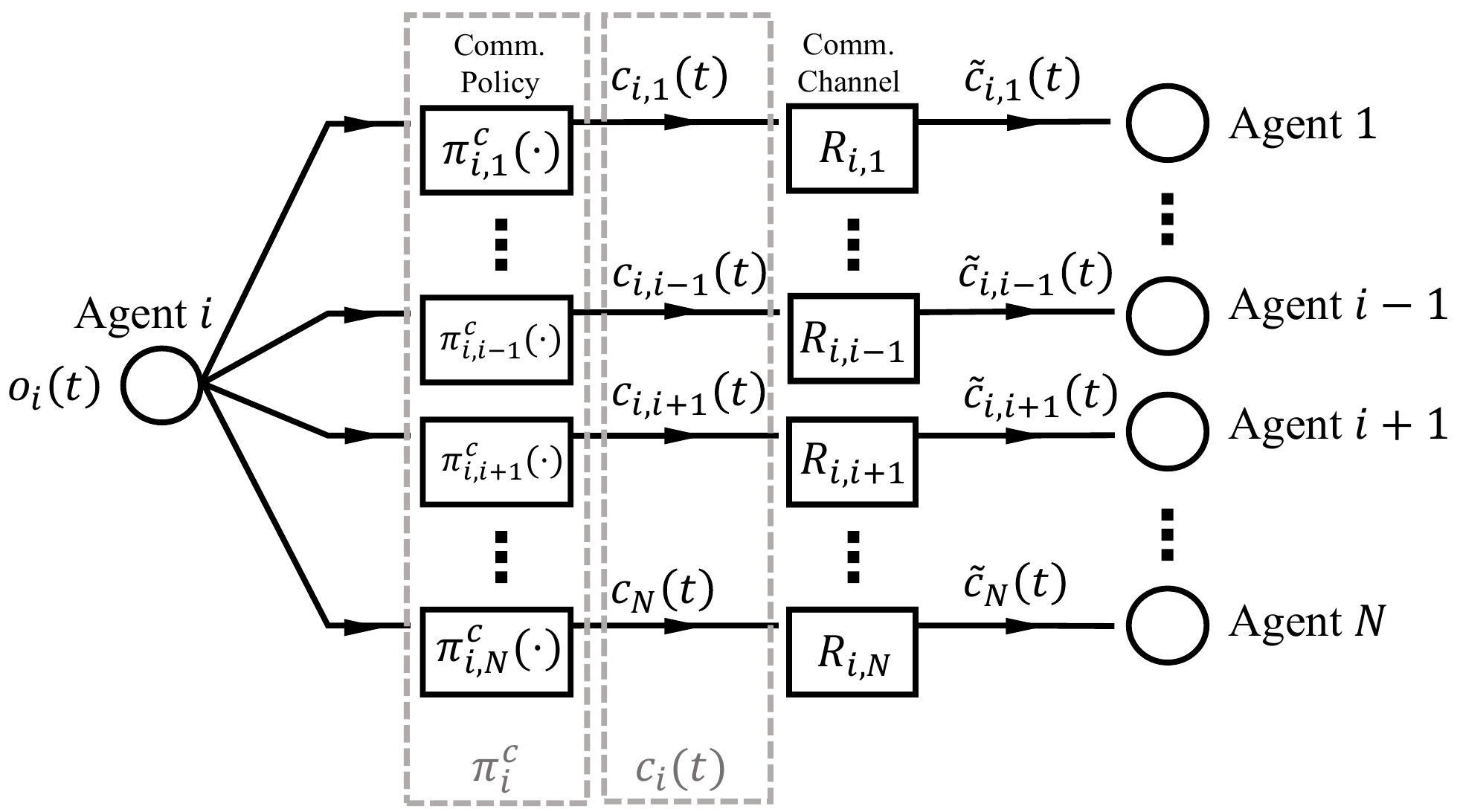} 
      \vspace{-2mm}
  \caption{Illustration of message transmission (encoding) at agents $i$. Agent $i$'s observation $\ro_i(t)$ at time step $t$ is transmitted to all other agents. Each inter-agent communication channel from agent $i$ to agent $j$ is assumed to be reliable so long as bit-budgets requirements - explained in (\ref{eq: bit-budget constraint}) - are respected. }
  \label{fig: Comms model}
  \vspace{-0.2cm}
\end{figure}
%

\vspace{-0mm}
\begin{definition} \textbf{(Distributed Joint Control and Communication Design (D-JCCD) problem).}
 Let $M$ be the MDP governing the environment and the scalar $R_{i,j} \in \mathbb{R}$ to be the bit-budget of each inter-agent communication channels. At any time step $t'$, we aim at designing the tuple $\pi_i = \langle \pi^m_i(\cdot), \pi^c_{i} \rangle$ to solve the following variational dynamic programming
{\small \begin{align}\label{eq: Decentralized Joint Control and Communication Design (D-JCCD) problem}
 \underset{\pi_i}{\textnormal{argmax}} ~~ 
\mathbb{E}_{\pi_i}                  \Big\{
   \bg(t') 
\Big\};~~
 \textnormal{s.t.} ~~  
    B_{i,j} \leq 2^{R_{i,j}},  \,\, \forall  i,j \in \mathcal{N}
\end{align} }
\sloppy where the expectation is taken over the joint pmf of system's trajectory $\{\rtr\}_{t'}^{T'} = [\ro_1(t'),..., \ro_N(t'), \rm(t'), ..., \ro_1(T'),..., \ro_N(T'), \rm(T')]$, when each agent $i$ follows the policy $\pi_i$ for all agents $i \in \mathcal{N}$.
\end{definition}

In contrast to \cite{mostaani2022task}, we do not characterize the performance gap caused by the limited connectivity in the communication network of agents. Characterizing the difference between the performance of the MAS that runs over heterogeneous bit-budgets and the MAS that runs over perfect communication channels is deferred to future works. The present paper, however, will provide numerical studies on the performance of the proposed scheme - ESAIC - under asymmetrical communication bit-budgets  $R_{i,j}$.


\section{State Aggregation for Information Compression (SAIC)}

Instead of directly solving the D-JCCD problem (\ref{eq: Decentralized Joint Control and Communication Design (D-JCCD) problem}), problem \eqref{eq: Decentralized Joint Control and Communication Design (D-JCCD) problem} is tackled via SAIC in a three-step process as proposed in \cite{mostaani2022task} - each described in a separate subsection as follows.
\vspace{-0.3cm}
\subsection{Centralized training phase}
\vspace{-0.2cm}
Following a centralized training and distributed execution approach \cite{FoersterLearning, FoersterCounter}, SAIC solves the problem from a centralized point of view where all the communications between agents and a central controller are considered to be perfect:

{\small
\begin{align}\label{centralized problem - general problem}
 \pi^*(\cdot) = \underset{\pi(\cdot)}{\text{argmax}}
 \,\,\,\mathbb{E}_{\pi}                  \!\Big{\{}
\bg(t)
                \!\Big{\}},  
\end{align}}
and the policy $\pi$ can be expressed as a CMF $
    \pi \Big( \rm(t) \Big{|} \rs(t)  \Big)  =  p \big(\rm(t) \big{|} \rs(t) \big)$. In the centralized problem (\ref{centralized problem - general problem}), the
  objective is to design one centralized control strategy $\pi(\cdot): \Omega^N \rightarrow \mathcal{M}^N$. 

  \vspace{-0.3cm}
  \subsection{Task-oriented communication/quantization design (TOCD)}
  \vspace{-0.2cm}
  The result obtained in the previous phase, centralized training, is transferred to the next phase via the means of the value function. That is, we use the optimal control policy $\pi^*(\cdot)$ to compute the value of observations - see the relationship between the two in \eqref{value function iterated expectation simplified - State aggregation}. The value function helps measure how valuable each agent's observations are for other agents' decision-making. The non-injective surjective mapping  $V^*(\cdot): \Omega \rightarrow \mathcal{V} \subset \mathbb{R} $, that is obtained after solving the centralized problem, allows us to solve the communication problem over the output space of the mapping $V^*(\cdot)$ - value space $\mathcal{V}$ - rather than over the original observation space.

It is shown in \cite{mostaani2020task,mostaani2020state} that, after obtaining $\pi^{*}(\cdot)$ as the optimal solution to (\ref{centralized problem - general problem}), one can obtain the value of each observation $\ro^{(k)}$ for all $k \in \{1, ..., |\Omega| \}$ following the
 {\small\begin{align} \label{value function iterated expectation simplified - State aggregation}
   & V^{* [N]}\big( \ro_i(t) = \ro^{(k)}\big) = \\ 
   &\sum_{{\ro_{-i}(t)  \in \Omega^{N-1}} }    
   \mathbb{E}_{\pi^*}                  \Big\{
   \bg(t)  \big{|}
   \bs(t) , \pi^*\big( \bs(t) \big)  
\Big\}
   p\big(\bo_{-i}(t) = \ro_{-i}(t)\big),       \notag
\end{align}}
where $\bs(t) = [ \ro_1(t), \dots, \ro_N
(t)]$ and the summation $\sum_{{\ro_{-i}(t)  \in \Omega^{N-1}} }$ is used to denote $N-1$ summations over all possible values for $ \ro_{-i}(t) =  [\ro_i(t)]_{i \in \mathcal{N}_{-i}}$. As also shown in the transition of Fig. \ref{fig: SAIC-Communications Design}. a to Fig. \ref{fig: SAIC-Communications Design}. b, by knowing the mapping $V^{* [N]}(\cdot)$ we can map all the observation values to the one-dimensional value space $\mathcal{V}$. Accordingly, the clustering of observation points will no longer be done based on their observation values e.g., $\ro_i(t)$, but based on the value function of the observation values, e.g., $V^{* [N]}(\ro_i(t))$ - where the superscript $[N]$ illustrates the number of agents in the centralized training phase \footnote{Note that, whenever a function/policy - e.g., $\pi^c_{i,j}(\cdot)$ - is obtained via a centralized training which has had $N'$ number of agents in it, the superscript $[N']$ is used for that function/policy - e.g., $\pi^{c,[N']}_{i,j}(\cdot)$.}. This would result in solving the following TODC problem in the form of a clustering problem
    \vspace{-2mm}
    {\small \begin{equation} \label{Task-Based Information Compression}
        \begin{aligned}
        &  \underset{\mathcal{P}_{i,j}}{\text{min}}
        & & {\sum}_{k=1}^{2^{R_{i,j}}} {\sum}_{\ro \in \mathcal{P}_{i,k}} \Big{|}                 V^{* [N]}\big(\ro_i(t) = \ro \big) - \mu^{'}_k \Big{|}, 
        \end{aligned}
    \end{equation}}
    to cluster observations via the partition $\mathcal{P}_{i,j} =  \{ \mathcal{P}_{i,j,1}, \dots, \mathcal{P}_{i,j,B_{i,j}} \}$ and learn the communication strategy $\pi^c_{i,j}(\cdot)$, where for each $\mathcal{P}_{i,j}$, all the observations $\ro_i(t) \in \mathcal{P}_{i,j,k}, \forall k \in \{1,..., B_{i,j}\}$ are corresponded to a single unique code-word $\pi^{c}_{i,j}(\ro_i(t))$, equivalently the output of the image function $ \Ddot{\pi}^{c}_{i,j}(\mathcal{P}_{i,j,k}), \forall k \in \{1,..., B_{i,j}\}$ is a single member set. Solving problem (\ref{Task-Based Information Compression}) is illustrated in Fig. \ref{fig: SAIC-Communications Design} by a transition from subplot "b" to "c". It was shown in \cite{mostaani2020state} that the optimal solution to (\ref{Task-Based Information Compression}) is the optimal solution to an approximated version of the problem (\ref{eq: Decentralized Joint Control and Communication Design (D-JCCD) problem}). Note that for every agent $i$, the problem (\ref{Task-Based Information Compression}), should be solved $N^c_i$ number of times where $N^c_i$ denotes the of agents $j \in \mathcal{N}_{-i}$ for whom the bit-budget of communications $B_{i,j}$ are distinct. Equivalently, the communication policy of agent $i$ for the two distinct receiving ends $j,j' \in \mathcal{N}_{-i}$ will stay the same if $B_{i,j} = B_{i,j'}$.
    \vspace{-0.3cm}
    \subsection{Decentralized Training Phase} \label{subsect: ESAIC: SAIC: decentralized training phase}
    \vspace{-0.2cm}
    Once the clustering problem (\ref{Task-Based Information Compression}) is solved, we have obtained indirect task-effective communication policies $\pi^c_i, \forall i \in \mathcal{N}$ \footnote{These quantization policies are indirect since they can be obtained without any prior knowledge about the task. And, they are task-effective, since they can be designed to preserve the accuracy of observation data when the observation data is deemed valuable for the specific task.}. To completely solve the D-JCCD problem (\ref{eq: Decentralized Joint Control and Communication Design (D-JCCD) problem}) via SAIC, we still have to find the optimal control policy $\pi^m_i(\cdot)$ for each agent $i$. Via the control policy $\pi^m_i(\cdot)$, at any time step $t$, agent $i$ selects a control signal $\bm_i(t)$, conditioned only on the quantized data received from the other agents $\tilde{\bc}_i(t) \in \prod_{j\in \mathcal{N}_{-i}} \mathcal{C}_{j,i} $, together with its own observation $\bo_i(t) \in \Omega$. SAIC obtains the control policy $\pi^m_i(\cdot)$ for each agent $i$, via a distributed training phase, in which agents communicate through bit-budgeted communication channels - following (\ref{eq: bit-budget constraint}). To this aim, the communications of each agent $i \in \mathcal{N}$ to each agent $j \in \mathcal{N}_{-i}$ are carried out via the communication policy $\pi^c_{i,j}(\cdot)$ that is obtained by solving (\ref{Task-Based Information Compression}). To obtain asymptotically optimal control policies, SAIC utilizes distributed Q-learning \cite{lauer2000distributedQ} for the distributed training phase.
    \vspace{-0.3cm}
    \subsection{Computational Complexity}
    \vspace{-0.2cm}
    The computational complexity of the centralized training phase $\mathcal{O}(|\Omega|^N \times |\mathcal{M}|^N)$ for a certain number of agents $N$ grows linearly with the size of observation and action spaces and for a certain size of observation-action space grows exponentially with the number of agents $N$. This makes computational cost of SAIC for large MASs prohibitively high, limiting its application to MASs composed of only a few agents. Given the exponential time complexity of the centralized training phase and its much higher time complexity compared with the distributed training phase, the centralized training phase is the major computational bottleneck of SAIC.

\vspace{-3mm}
\section{Extended State Aggregation for Information Compression in Multiagent Coordination Tasks}
\label{sec: ESAIC: idea and algorithm}
\vspace{-1mm}

 \begin{figure*}[!t] 
  \centering 
      \includegraphics[width=0.65\textwidth]{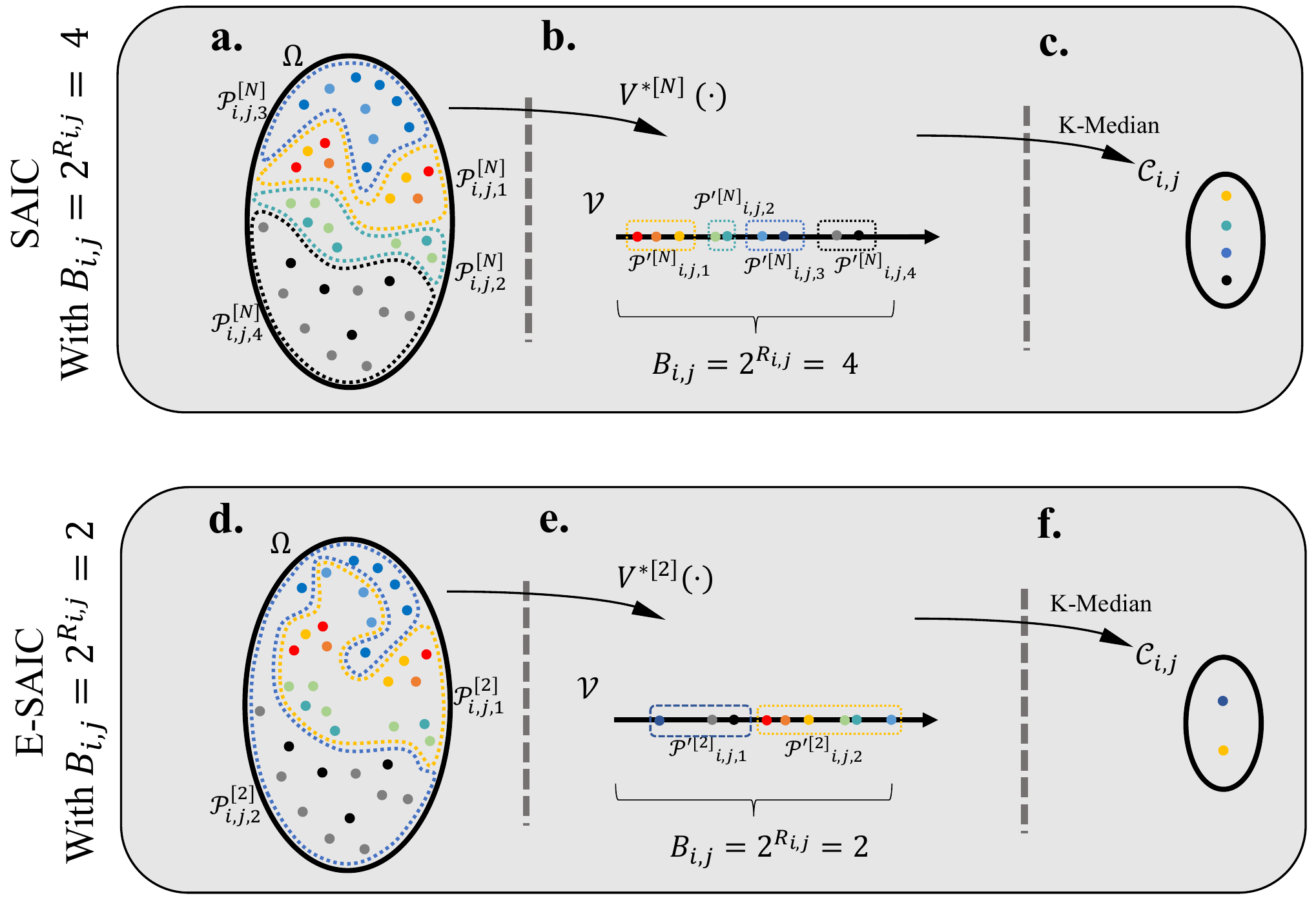} 
      \vspace{-4mm}
  \caption{Illustration of the steps taken to design the communication policy $\pi^c_{i,j}(\cdot)$ using SAIC and ESAIC.}
  \label{fig: SAIC-Communications Design}
  \vspace{-0.4cm}
\end{figure*}

In this section, we propose a straightforward extension of SAIC called Extended SAIC (ESAIC) which is capable of drastically reducing its time complexity in the centralized training phase. While the time complexity of the centralized training phase in SAIC grows exponentially with respect to the number of agents, in ESAIC, increasing the number of agents in the MAS, has no impact on the computational complexity of the centralized training phase - making ESAIC more efficient than SAIC \cite{mostaani2022task} and any other MARL with a central training phase \cite{FoersterCounter,foerster2020qmix,mostaani2022centralized}. ESAIC is not just a replacement for SAIC, but introduces the more general idea of reducing the number of agents in the MAS for the centralized training phase. Extended SAIC, proceeds by following the same steps as SAIC to solve the D-JCCD problem: (i) centralized training phase, (ii) task-oriented data compression problem, (iii) distributed training of agents' control policies. The only difference is that the centralized training phase is done with only two agents in the training phase - regardless of the number of agents $N$ for which we want to solve the original D-JCCD problem (\ref{eq: Decentralized Joint Control and Communication Design (D-JCCD) problem}).
\vspace{-0.3cm}
\subsection{Centralized Training Phase}
\vspace{-0.2cm}
Accordingly, by carrying out the centralized training phase, we solve the problem (\ref{centralized problem - two-agent problem}) for a two-agent system to obtain $V^{*[2]}(\cdot)$, $\pi^{*[2]}(\cdot)$ following
{\small
\begin{align}\label{centralized problem - two-agent problem}
 \pi^{*[2]}(\cdot) = \underset{\pi(\cdot)}{\text{argmax}}
 \,\,\,\mathbb{E}_{\pi}                  \!\Big{\{}
\bg(t)
                \!\Big{\}}. 
\end{align}}
Obtaining the value function $V^*(\cdot)$ is imperative both in SAIC and ESAIC because of a multitude of reasons: (i) The mapping $V^*(\cdot)$ projects the high-dimensional observation points to the single-dimensional space of $\mathcal{V} \subset \mathbb{R}$, leading to a reduced the complexity for the clustering problem, (ii) the mapping $V^*(\cdot)$ captures the features of the control task and allows us to take these features into account in our communication design problem - that helps us to separately design communications and control policies, (iii) the clusters in the output space of the $V^*(\cdot)$ are shown to be linearly separable, (iv) last but not least, it is very intuitive to see how the mapping $V^*(\cdot)$ is an indirect/universal measure to quantify the value of each observation for any given task. Accordingly, the observation points are not clustered together based on how similar they are, but based on how similarly valuable they are for the task.
\vspace{-0.4cm}
\subsection{Task-oriented communication/quantization design (TOCD)}
\vspace{-0.2cm}
Afterwards, by solving the following task-oriented quantization problem
\vspace{-2mm}
    {\small \begin{equation} \label{Task-Based Information Compression - ESAIC}
        \begin{aligned}
        &  \underset{\mathcal{P}^{[2]}_{i,j}}{\text{min}}
        & & {\sum}_{k=1}^{2^{R_{i,j}}} {\sum}_{\ro \in \mathcal{P}_{i,k}} \Big{|}                 V^{* [2]}\big(\ro_i(t) = \ro \big) - \mu^{'}_k \Big{|},
        \end{aligned}
    \end{equation}}
we obtain a new partition $\mathcal{P}^{[2]}_{i,j}$ of the observation space that leads to a different, yet effective communication/quantization policy $\pi^{c [2]}_{i,j}(\cdot)$. \textcolor{Mycolor3}{ K-median clustering can be used to solve the above-mentioned problem (\ref{Task-Based Information Compression - ESAIC}). In this direction, to obtain the quantization policy of agent $i$ for its communication to agent $j$ we compute a partition $\mathcal{P}'^{[2]}_{i,j}$ of the set $\mathcal{V}^{[2]}_{i}$ - where $\mathcal{V}^{[2]}_{i}$ is the image of $\Omega$ under the function $V^{* [2]}(\cdot)$ i.e., $\mathcal{V}^{[2]}_{i} = \ddot{V}^{* [2]}(\Omega)$. We first solve the following problem
\vspace{-2mm}
    {\small \begin{equation} \label{eq: ESAIC: TOCD}
        \begin{aligned}
        &  \underset{\mathcal{P}'^{[2]}_{i,j}}{\text{min}}
        & & {\sum}_{k=1}^{2^{R_{i,j}}} {\sum}_{V^*(\ro_i(t)) \in \mathcal\mathcal{P}'^{[2]}_{i,j,k}} \Big{|}                 V^{* [2]}\big(\ro_i(t)\big) - \mu^{''}_k \Big{|}.
        \end{aligned}
    \end{equation} }
 Afterwards, as shown in Figure \ref{fig: SAIC-Communications Design}, the observation points should be clustered according to the clustering of their corresponding values. That is, any two distinct observation points $\ro'_i, \ro''_i \in \Omega$ are clustered together in $\mathcal{P}_{i,j,k}$ if and only if their values $V^{* [2]}(\ro'_i), V^{* [2]}(\ro''_i) \in \mathcal{P}'_{i,j}$ are in the same cluster $\mathcal{P}'_{i,j,k}$.} Accordingly, each agent $i$ solves the problem \eqref{eq: ESAIC: TOCD} for $N_i \leq N$ number of times where, $N_i^c$ stands for the distinct number of bit-budgets at which agent $i$ has to communicate with other agents in the network. 
\vspace{-0.4cm}
\subsection{Decentralized Training Phase}
\vspace{-0.2cm}
After obtaining the communication policies, we solve the following distributed control design problem
\vspace{-4mm}
{\small\begin{align}\label{eq: Decentralized training problem}
 \underset{\pi^m_i}{\textnormal{argmax}} ~~ 
\mathbb{E}_{\pi_i}                  \Big\{
   \bg(t') 
\Big\}, ~~ \forall i \in \mathcal{N}
\end{align}}
through a distributed training phase to obtain the control policy of each agent $i$, where the expectation is taken over the MAS's trajectory that is influenced by both the control policy $\pi^c_i(\cdot)$ and the communication/quantization policy $\pi^m_i(\cdot)$ of all agents $i \in \mathcal{N}$. The detailed recipe of ESAIC can be found in Algorithm 1 and its performance will be studied both analytically and numerically in sections \ref{sec: analytical studies} and \ref{sec: numerical studies}, respectively.

As will be shown in section \ref{sec: analytical studies}, the number of agents in the training phase can be reduced, regardless of the specific method used to compute the function $V^{*[2]}(\cdot)$. Accordingly, we conjecture that other schemes such as deep Q-learning \cite{mnih2015human}, deep double Q-learning \cite{van2016deep}, deep deterministic policy gradient \cite{lillicrap2015continuous} and other similar (deep) reinforcement learning algorithms can be used for a two-agent centralized training phase to approximate the value function $V^{*[2]}(\cdot)$ - as long as the condition of theorem \ref{theorem: main theorem} is met.


\begin{algorithm}\label{alg: ESAIC - Algorithm}
\caption{ Extended State Aggregation for Information Compression (ESAIC)}
\begin{algorithmic}[1]

\State \small \textbf{Input:} $\gamma$, $\alpha$, $c$
 \State \textbf{Initialize} all-zero Q-table $Q^{m}_{i}(\cdot) \leftarrow Q^{m,(k-1)}_{i}(\cdot)$, for $i=1:N$
 \State $\;\;\;\;\;\;\;\;\;\;\;\;\;\;\!$ and all-zero Q-table $Q\big(\rs(t),\rm(t)\big)$.
 \State Obtain $\pi^{* [2]}(\cdot) \text{ \& } Q^{* [2]}(\cdot)$ by solving (\ref{centralized problem - general problem}) using Q-learning \cite{Suttonintroduction}. 
 \State Compute $V^{* [2]}\big( \ro_i(t) \big)$ following eq. (\ref{value function iterated expectation simplified - State aggregation}), for $\forall \ro_i(t) \in \Omega$.
 \State Obtain $\pi^{c [2]}_i$ by solving the problem (\ref{Task-Based Information Compression - ESAIC}) $N^c_i$ times, for $i=1:N$.
  \For{each episode $k=1:K$} 
 \State {\small Randomly initialize the observation $\ro_i(t=1)$, for $i=1:N$}
\For{$t_k = 1:M$}

\vspace{1mm}          
            \State Select $\rc_i(t)$ following $\pi^{c [2]}_i(\cdot)$, for $i=1:N$

            \State Obtain message $\tilde{\rc}_i(t)$, for $i=1:N $
            
            \State Update $Q^{m}_{i}\big(\ro_i(t-1),\tilde{\rc}_i(t-1),\rm_i(t-1)\big)$
            , for $i=1:N$
            
            \State Select $\rm_i(t) \in \mathcal{M}$ following $\epsilon$-greedy, for $i=1:N$
            \State Obtain reward $r\big( \rs(t),\rm(t) \big)$, {for} $i=1:N$
            
            \State Make a local observation $\ro_i(t)$, for $i=1:N$
    \vspace{1mm}

\State $t_k=t_k+1$

\EndFor\label{euclidendwhile-2}
\State \textbf{end}
\State Compute $\sum^{M}_{t=1} \gamma^{t-1} r_t$ for the $l$th episode
\State update $\epsilon$ via: $\epsilon =  -0.99 k/K + 1$
\EndFor
\State \textbf{end}
\vspace{1mm}

\State \textbf{Output:} $Q^{m}_{i}(\cdot)$ and  $\pi_{i}^{m}\big(\rm_i(t)|\ro_i(t),\tilde{\rc}_i(t)\big)$,
{for} $i=1:N$

%
\vspace{-1mm}
\end{algorithmic}
\end{algorithm}

\vspace{-0.4cm}
\section{Analytical study of ESAIC} \label{sec: analytical studies}
\vspace{-0.4cm}

After introducing the idea of ESAIC, in \ref{sec: ESAIC: idea and algorithm}, in this section, we provide analytical studies on its average return performance as well as studies on its computational complexity.

\subsection{Average return performance}
The main result of this subsection is to prove that by solving the problem (\ref{Task-Based Information Compression - ESAIC}), one can obtain inter-agent communication/quantization policies which are as effective as the solutions to the problem (\ref{Task-Based Information Compression}). Equivalently, one can reduce the number of agents in the centralized training phase and yet draw enough insights from it to design task-oriented communication policies. The proof provided in this section, therefore, is a testament to how rich is the value function of a two-agent centralized training phase to indirectly incorporate the features of the control task into the task-oriented communication design problem (\ref{Task-Based Information Compression - ESAIC}). These features have been previously extracted e.g., from the control problem through the Eigenvalues of the plant\footnote{In the terminology of reinforcement learning, the plant is referred to as the environment.} to be controlled \cite{tatikonda2004control} - for linear time-invariant plants.

\begin{theorem} \label{theorem: main theorem}
 Let the bijection $f(\cdot): \mathcal{V}^{[2]} \rightarrow \mathcal{V}^{[N]} $ be the mapping from the value of observations for a two-agent scenario to the $N$-agent. For all $i,j \in \mathcal{N}$, the partition $\mathcal{P}^{[2]}_{i,j}$ proposed by ESAIC (that is obtained by solving the problem (\ref{Task-Based Information Compression - ESAIC})) are the same as the partition $\mathcal{P}^{[N]}_{i,j}$ proposed by SAIC (that is obtained by solving the problem (\ref{Task-Based Information Compression})) if
 \vspace{-4mm}
 {\small \begin{align}
     &  c_1: \label{condition: main condition--}
     & & \forall k \in \{1,\dots, B_{i,j}\} \,\,\, \exists k' \in \{1,\dots, B_{i,j}\}  : \\
     & & & \Ddot{f}(\mathcal{P}'^{[2]}_{i,j,k}) = \mathcal{P}'^{[N]}_{i,j,k'}  \notag 
 \end{align}}
\end{theorem}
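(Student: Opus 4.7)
The plan is to reduce the equality of the two observation-space partitions $\mathcal{P}^{[2]}_{i,j}$ and $\mathcal{P}^{[N]}_{i,j}$ to a statement about the value-space partitions $\mathcal{P}'^{[2]}_{i,j}$ and $\mathcal{P}'^{[N]}_{i,j}$, and then invoke condition $c_1$ together with the bijectivity of $f$. First, I would unpack the construction described just before the theorem: for both SAIC and ESAIC, two observations $\ro',\ro''\in\Omega$ are placed in the same observation-space cluster if and only if their values lie in the same cell of the value-space partition. Concretely, $\ro',\ro''\in\mathcal{P}^{[2]}_{i,j,k}$ iff $V^{*[2]}(\ro'),V^{*[2]}(\ro'')\in\mathcal{P}'^{[2]}_{i,j,k}$, and analogously for $\mathcal{P}^{[N]}_{i,j}$ with $V^{*[N]}$. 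Thus equality of the two observation-level partitions is equivalent to the statement that the equivalence relations induced on $\Omega$ by the value-space partitions coincide.

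Next, I would exploit the hypothesis that $f\colon\mathcal{V}^{[2]}\to\mathcal{V}^{[N]}$ is a bijection. By the very definition of $f$ as the mapping that sends the value of an observation in the two-agent centralized training to its value in the $N$-agent one, we have the commutativity relation $V^{*[N]}(\ro)=f\bigl(V^{*[2]}(\ro)\bigr)$ for every $\ro\in\Omega$; this is the key bridge between the two value functions and I would make it explicit at the start, citing equation \eqref{value function iterated expectation simplified - State aggregation}. Combined with condition $c_1$, which asserts that $\ddot{f}$ sends every cell $\mathcal{P}'^{[2]}_{i,j,k}$ onto some cell $\mathcal{P}'^{[N]}_{i,j,k'}$, and using that $f$ is a bijection (so $\ddot{f}$ preserves set equality and disjointness), I would conclude that $k\mapsto k'$ is in fact a permutation of $\{1,\dots,B_{i,j}\}$: distinct source cells map to distinct target cells, and since there are exactly $B_{i,j}$ of each, the correspondence is a bijection between the two value-space partitions.

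With this in hand, a short equivalence chain closes the argument. For arbitrary $\ro',\ro''\in\Omega$,
\begin{align*}
\ro',\ro''\in\mathcal{P}^{[2]}_{i,j,k}
 &\Longleftrightarrow V^{*[2]}(\ro'),V^{*[2]}(\ro'')\in\mathcal{P}'^{[2]}_{i,j,k} \\
 &\Longleftrightarrow f\bigl(V^{*[2]}(\ro')\bigr),f\bigl(V^{*[2]}(\ro'')\bigr)\in\ddot{f}\bigl(\mathcal{P}'^{[2]}_{i,j,k}\bigr) \\
 &\Longleftrightarrow V^{*[N]}(\ro'),V^{*[N]}(\ro'')\in\mathcal{P}'^{[N]}_{i,j,k'} \\
 &\Longleftrightarrow \ro',\ro''\in\mathcal{P}^{[N]}_{i,j,k'},
\end{align*}
where the second equivalence is the injectivity of $f$ (membership is preserved under bijective maps), the third uses $V^{*[N]}=f\circ V^{*[2]}$ together with $c_1$, and the fourth is the pull-back construction for SAIC. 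Because the correspondence $k\mapsto k'$ is a bijection between cells, this shows that the two partitions of $\Omega$ coincide as unordered partitions, which is exactly the claim.

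\textbf{Expected obstacle.} The only delicate point is making precise what is meant by the bijection $f$ and, in particular, justifying $V^{*[N]}=f\circ V^{*[2]}$ without circularity; one must be careful that $f$ is defined on the \emph{image} $\mathcal{V}^{[2]}$ of $V^{*[2]}$ rather than on all of $\mathbb{R}$, so that bijectivity is meaningful even when $V^{*[2]}$ or $V^{*[N]}$ is not injective on $\Omega$. Once this is pinned down, condition $c_1$ does essentially all the work; the remainder is bookkeeping with image sets $\ddot{f}(\cdot)$.
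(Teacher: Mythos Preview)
Your proposal is correct. Both you and the paper rest on the identity $V^{*[N]}=f\circ V^{*[2]}$ together with the fact that each observation-space partition is the pull-back of the corresponding value-space partition through the value function; in the paper this appears as the verification that the composite image map $\bigl[\Ddot{V}^{*\,[N]}\bigr]^{-1}\circ\Ddot{f}\circ\Ddot{V}^{*\,[2]}$ acts as the identity on subsets of $\Omega$, which is precisely what your four-line equivalence chain encodes. The difference is packaging: the paper first proves an instrumental lemma about sums of the form $\sum_{a\in\mathcal{A}}|h(a)-c|$, then an ``oracle'' lemma showing that if $f$ were known the two $k$-median \emph{objectives} coincide, and only in the second half of the main argument does it turn to the observation-level cells themselves. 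You bypass the clustering objective entirely and argue directly on the partitions, and you also make explicit---via injectivity of $f$, disjointness of the source cells, and the equal number $B_{i,j}$ of cells on each side---that the correspondence $k\mapsto k'$ is a permutation, a point the paper leaves implicit. Your route is shorter and more transparent; the paper's detour through the $k$-median objective buys nothing for the stated conclusion, though its image-function bookkeeping (in particular the careful check that $\Ddot{g}^{-1}\bigl(\Ddot{g}(\mathcal{E}')\bigr)=\mathcal{E}'$) is spelled out more explicitly than in your sketch.
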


\begin{proof}
Appendix  \ref{app: proof: theorem: theorem: main theorem}.
\end{proof}

 

\textit{Remark 1:} Following the theorem \ref{theorem: main theorem}, all the guarantees that are presented for the performance of SAIC are in place if $R_{i,j} = R ~~ \forall i,j \in \mathcal{N}$.

Theorem \ref{theorem: main theorem}, provides a conditional guarantee for the equivalence of the results obtained by SAIC and ESAIC. The condition, however, is not always easy to verify. In the next section, we will provide numerical results also for the cases where the condition of the theorem is violated. The near-optimal performance of ESAIC, even under the violation of $c_1$ in (\ref{condition: main condition--}), confirms that the proposed condition is too strong and can be further relaxed in future works. Moreover, once we see certain structures and features in the function $f(\cdot)$ for smaller MASs, they may hold for larger MASs too. This would provide us with an analytical basis to use proof by induction to verify the condition $c_1$. The following remarks, introduce some of these features. 

\textit{Remark 2:} If the function $f(\cdot)$ is limit-preserving then it meets the condition $c_1$ \cite{davey_priestley_2002}.

\textit{Remark 3:} If the function $f(\cdot)$ is strictly monotonic, then it is limit-preserving too \cite{davey_priestley_2002}.

In particular, let the superscript in ${f}^{[N]}(\cdot)$ determine the superscript of its range $\mathcal{V}^{[N]}$. We will show in lemma \ref{lemma: verfiablity} that if $f^{[3]}(\cdot)$ is strictly monotonic, so is $f^{[N]}(\cdot)$ - for a specific class of reward functions and observation structures. Accordingly, to verify the condition $c_1$ for every $N \geq 3$, it will be sufficient to just verify it for $N=3$. 

\begin{lemma} \label{lemma: verfiablity}
    Let the function $f^{[3]}(\cdot): \mathcal{V}^{[2]} \rightarrow \mathcal{V}^{[3]}$ be strictly monotonic, and the conditions $c_2$ and $c_3$ met, defined as follows:

    $c_2$: The discrete derivative of the $r^{[k]}(\cdot)$ with respect to $k$ be a linear function, i.e., there exist scalars $\tau \in \mathbb{R}^+$ and $\zeta \in \mathbb{R}$ such that
\begin{equation} \label{eq: linear relation-lem}
    r^{[k+1]}(t) = \tau r^{[k]} + \zeta,
\end{equation}

$c_3$: The observations $\ro_{i}(t)$ of the $i$'th agent are independent of the observations $\ro_{j}(t)$ of the $j$'th agent $\forall i,j \in \mathcal{N}$ and $\forall \ro \in \Omega$.

Then, the Proposition $P(N)$ holds true as follows:

\textbf{$P(N)$:} The function $f^{[N]}(\cdot): \mathcal{V}^{[2]} \rightarrow \mathcal{V}^{[N]}$ is strictly monotonic for all $N\geq 3$.
\end{lemma}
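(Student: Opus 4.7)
The plan is to proceed by strong induction on $N$, with base case $N=3$ provided directly by the hypothesis that $f^{[3]}$ is strictly monotonic. The inductive step asserts that whenever $f^{[N]}: \mathcal{V}^{[2]} \to \mathcal{V}^{[N]}$ is strictly monotonic, so is $f^{[N+1]}$. The strategy is to derive an affine relation
\begin{equation*}
V^{*[N+1]}\bigl(\ro^{(k)}\bigr) \;=\; \alpha_N\, V^{*[N]}\bigl(\ro^{(k)}\bigr) \;+\; \beta_N
\end{equation*}
with $\alpha_N>0$, since then $f^{[N+1]}$ is the composition of the strictly increasing affine map $x\mapsto \alpha_N x+\beta_N$ with $f^{[N]}$, and composition of two strictly monotonic maps is strictly monotonic.

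To establish the affine relation, I would first iterate condition $c_2$ to obtain $r^{[N+1]} = \tau^{N-1} r^{[2]} + \widetilde{\zeta}_{N-1}$ for an explicit constant $\widetilde{\zeta}_{N-1}$. Since $\tau>0$, multiplying the per-stage reward by a positive constant and adding a constant leaves the $\mathrm{argmax}$ in each Bellman backup unchanged, so the policy $\pi^{*[N+1]}$ coincides with the optimizer of the scaled $N$-agent MDP, and the corresponding joint action-value functions inherit the same affine relation, up to a horizon-dependent constant. Marginalizing over actions under this shared optimal policy transfers the relation to the joint state-value functions $V^{*[N+1]}_{\mathrm{joint}}$ and $V^{*[N]}_{\mathrm{joint}}$.

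Next, I would plug this joint-value identity into the iterated expectation \eqref{value function iterated expectation simplified - State aggregation}. Condition $c_3$ gives $p(\ro_{-i}) = \prod_{j\ne i} p(\ro_j)$, so the sum over $\ro_{-i}^{[N+1]}$ can be split into a sum over the extra agent's observation times a sum over the remaining $N-1$ coordinates; linearity of expectation then lifts the affine relation at the joint level to the agent-wise level, giving $V^{*[N+1]}(\ro^{(k)}) = \alpha_N V^{*[N]}(\ro^{(k)}) + \beta_N$ with $\alpha_N=\tau>0$ (the extra summation over the added agent contributes only a constant, thanks to independence). Composition with the strictly monotonic $f^{[N]}$ from the inductive hypothesis completes the inductive step and yields $P(N+1)$.

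The principal obstacle is reconciling the fact that the domains of $r^{[N+1]}$ and $r^{[N]}$ differ by an extra observation-action pair, while $c_2$ is written as a pointwise recursion. I would handle this by reading $c_2$ as a relation that holds after marginalizing the additional agent's variables against the product measure supplied by $c_3$; equivalently, restricting attention to reward families (e.g.\ the symmetric-consensus reward used in Section \ref{sec: numerical studies}) in which the new agent's contribution enters as an affine function of the existing-agents reward plus a term whose expectation under $c_3$ is constant. Once this interpretation is fixed, the rest reduces to routine manipulation of the Bellman equation and the iterated-expectation formula, and Remarks 2--3 immediately convert strict monotonicity into the limit-preservation needed to verify $c_1$.
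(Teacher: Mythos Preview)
Your proposal is correct and follows essentially the same route as the paper: induction on $N$ with base case $N=3$, deriving from $c_2$ and $c_3$ the affine relation $V^{*[k+1]}(\ro)=\tau\,V^{*[k]}(\ro)+\zeta$ (the paper applies the one-step form of $c_2$ directly rather than iterating down to $r^{[2]}$, but this is cosmetic), and concluding strict monotonicity from $\tau>0$. You are also right to flag the domain mismatch in $c_2$; the paper resolves it exactly as you suggest, by marginalizing the added agent's observation via $c_3$ and the law of iterated expectations.
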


\begin{proof}
    Appendix 
    B.
\end{proof}

\emph{Remark 4}: Quadratic cost functions which a popular class of reward/cost function used for optimal control - see,  e.g., \cite{kostina2019rate} - follow the condition $c_2$. The reward function used in this paper, which is not a quadratic function, is among many other functions \cite{mostaani2019Learning, zilber2001communication, tung2021effective} that meet this condition.

\emph{Remark 5}: Condition $c_3$ is commonly assumed in the multi-agent system's literature \cite{kim2006exploiting,varakantham2007letting}.
 Intuitively, we know that in remote collaborative applications where agents are located at far distances from each other, their observations can often be considered statistically independent.

\subsection{Computational complexity} \label{subsect: complexity}

As is discussed in \cite{azar2011speedy}, the computational complexity of exact Q-learning is proportional to the size of state-action space. Exact Q-learning is used in the centralized and distributed training phases of SAIC and ESAIC. In the centralized training phase of SAIC, the computational complexity $\mathcal{O}( |\Omega  \times \mathcal{M}|^N)$ grows exponentially with the size of MAS $N$. Accordingly, the addition of each agent to the system multiplies the complexity of the Q-learning by $|\Omega \times \mathcal{M}|$. The complexity $\mathcal{O}( |\Omega  \times \mathcal{M}|^2)$ of the centralized training phase in ESAIC with respect to the size of the MAS $N$, however, is constant time. That is, ESAIC will always execute at the same time (or space) regardless of the size of the MAS $N$. 

The complexity $\mathcal{O}(| \Omega \times \mathcal{C}^{n-1} \times \mathcal{M}|)$ of the Q-learning problem that each agent solves in SAIC, at the decentralized training phase, also grows exponentially with the addition of each agent to the system. Compared with the centralized training phase, in the distributed training phase, SAIC is much less sensitive to the addition of an agent to the system. Although the complexity of the Q-learning at each agent $i$ multiplies by a constant $|\mathcal{C}|$ with the addition of each agent to the system, the size of the communication space $|\mathcal{C}|$ is much smaller than $|\Omega \times \mathcal{M}|$\footnote{To understand why "the size of the communication space $|\mathcal{C}|$ is much smaller than $|\Omega \times \mathcal{M}|$",  remember that we solve the problem (\ref{Task-Based Information Compression - ESAIC}) to significantly reduce the size of the communication message space $\mathcal{C}$ of agent $i$ compared with the size of its observation space $\Omega$.}. In the decentralized training phase, ESAIC follows the same complexity patterns.

\textit{Remark:} If the condition $c_1$ of theorem \ref{theorem: main theorem} is met, ESAIC offers the same performance as SAIC at a much reduced computational cost. Accordingly, for a problem comprised of $N$ agents, the time complexity of SAIC is $| \Omega \times \mathcal{M}|^{N-2}$ times higher than ESAIC.

\section{Numerical Studies} \label{sec: numerical studies}


\textcolor{Mycolor3}{To evaluate our proposed method, ESAIC, in this section, we leverage numerical experiments on a specific cooperative task i.e., a geometric consensus problem with finite observability, called the rendezvous problem. Geometric consensus problems are emerging in many new applications, such as UAV/vehicle platooning, which makes them a useful application domain for the framework proposed in this paper\cite{barel2017come}. Based on the results demonstrated in this section, the proposed framework, ESAIC, has been shown to be a suitable candidate for the distributed control of the large vehicle/UAV platoons under limited communications.}

The rendezvous problem, which is a subcategory of the geometric consensus, has already been studied in the literature of multi-agent systems \cite{zilber2001communication,amato2009incremental}, whereas in our case the inter-agent communication channel is set to have a limited bit-budget. The rendezvous is a particularly interesting testbed for multi-agent communications as it allows us to consider a cooperative MAS consisting of multiple agents whose coordination is dependent on communication. In particular, as detailed in subsection \ref{rendezvous problem - subsection}, if the communication between agents is not efficient, at any time step $t$ each agent $i$ will only have access to its local observation $\ro_i(t)$, which is its own location in the case of rendezvous problem. This mere information is insufficient for an agent to attain the larger reward $C_2$, but is sufficient to attain the smaller reward $C_1$. Accordingly, compared with cases in which no communication between agents is present, in the set-up of the rendezvous problem, efficient communication policies can increase the attained objective function of the MAS \cite{mostaani2022task}. We consider a variety of grid worlds with different size values $N$ and different locations for the goal-point $\omega^T$. We compare the proposed ESAIC and SAIC with the centralized Q-learning scheme that is guaranteed to achieve the optimal average return performance in the rendezvous problem. Our approach can be straightforwardly applied to other geometric systems e.g., by changing the reward function. In particular, a reward function that encourages the agents to come together as close as possible but not collide with each other can emulate a vehicle platooning scenario. While useful, it is outside the scope of our work to investigate the response of the multi-agent system to different rewarding schemes.  Note that, according to \cite{mostaani2022task}, regardless of the definition of the reward function, the geometric consensus problem (or in general the joint quantization and control problem) can be solved by SAIC if the necessary conditions are met, and a centralized training phase is feasible.

\vspace{-0.4cm}
 \subsection{Rendezvous Problem} \label{rendezvous problem - subsection}
 \vspace{-0.4cm}
  \begin{figure}[thb!]
  \centering
      \includegraphics[width=0.55\textwidth]{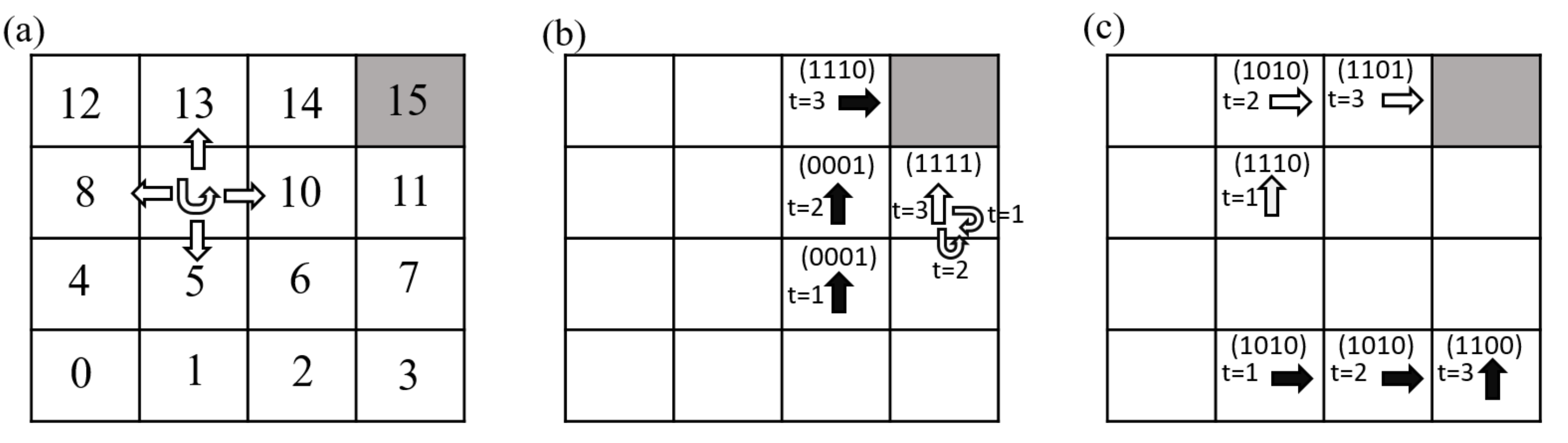}
      \vspace{-0.4cm}
  \caption{The rendezvous problem when $n=2$, $N=4$ and $\omega^T=15$: (a) illustration of the observation space, $ \Omega$, i.e., the location on the grid, and the environment action space $\mathcal{M}$, denoted by arrows, and of the goal state $\omega^T$, marked with gray background; (b) demonstration of a sampled episode, where arrows show the environment actions taken by the agents (empty arrows: actions of agent 1, solid arrows: actions of agent 2) and the $B=4$ bits represent the message sent by each agent. A larger reward $C_2> C_1$ is given to both agents when they enter the goal point at the same time, as in the example; (c) in contrast, $C_1$ is the reward accrued by agents when only one agent enters the goal position \cite{mostaani2019Learning}.}
  \label{fig: rendezvous problem}
  \vspace{-0.4cm}
\end{figure}

 As illustrated in Fig. \ref{fig: rendezvous problem}, in a rendezvous problem, multiple agents operate on an $N \times N$ grid world and aim at arriving at the same time at the goal point on the grid. The system operates in discrete time, with agents taking actions and communicating in each time step $t=1,2,...$ . Each agent $i \in \mathcal{N}$ at any time step $t$ can only observe its own location $\ro_i(t) \in \Omega$ on the grid, where the observation space is $\Omega = \{0,1,...,n^2-1\}$. Each episode terminates as soon as an agent or more visit the goal point which is denoted as $\omega^T \in \Omega$. That is, at any time step $t$ that the observation of each agent $i \in \mathcal{N}$ is a member of $\Omega^T$, the episode will be terminated - so the time horizon $M$ is non-deterministic. The subset $\mathcal{S}^T \subset \mathcal{S}$ also defines all state realizations where one or more agents are in the goal location i.e.,
 
 $\mathcal{S}^T = \{ \langle \ro_1(t), ..., \ro_n(t) \rangle \in \mathcal{S} \, | \,  \exists i \in \mathcal{N} : \ro_i(t) \in \omega^T \}$.
 
 We also define the subset $\mathcal{S}^T_{n'} \subset \mathcal{S}^T$ that includes all the terminal states where only $n'$ number of agents have arrived at the goal location i.e.,
 
  $\mathcal{S}^T_{n'} = \{ \langle \ro_1(t), ..., \ro_n(t) \rangle \in \mathcal{S} \, | \,  \forall i \in \mathcal{N}' : \ro_i(t) \in \omega^T \}$,
  
  where $\mathcal{N}' \subseteq \mathcal{N}$ is a subset of all agents with size $| \mathcal{N}' | = n'$. Following the same definition for $\mathcal{S}^T_{n'}$, the subset $\mathcal{S}^T_n$ is equivalent to the set of all terminal states where all agents are at the goal location. At time $t=1$, the initial position of all agents is randomly and uniformly selected amongst the non-goal states, i.e., for each agent $i \in \mathcal{N}$ the initial position of the agent is $\ro_i(1) \in \Omega - \{ \omega^T \}$.

At any time step $t=1,2,...$ each agent $i$ observes its position, or environment state, and acquires information about the position of the other agents by receiving a communication message vector $ {\rc}_{-i}(t)$ sent by the other agents $j \in \mathcal{N}_{-i}$ at the time step $t$.  Based on this information, agent $i$ selects its environment action $\rm_i(t)$ from the set $\mathcal{M} = \{\text{Right},\text{Left},\text{Up},\text{Down},\text{Stop}\}$, where an action $\rm_i(t) \in \mathcal{M}$ represent the horizontal/vertical move of agent $i$ on the grid at time step $t$. For instance, if an agent $i$ is on a grid-world as depicted on Fig. \ref{fig: rendezvous problem} (a), and observes $\ro_i(t)=4$ and selects "Up" as its action, the agent's observation at the next time step will be $\ro_i(t+1)=8$. If the position to which the agent should be moved is outside the grid, the environment is assumed to keep the agent in its current position. We assume that all these deterministic state transitions are captured by $T\big(\ro_1(t), ..., \ro_n(t),\rm_1(t), ..., \rm_n(t)\big)$, which can determine the observations of agents in the next time step $t+1$ following
{\small\[
\langle \ro_1(t+1), ..., \ro_n(t+1) \rangle
= T\big(\ro_1(t), ..., \ro_n(t),\rm_1(t), ..., \rm_n(t)\big).
\]}
Accordingly, given observations $\langle \ro_i(t+1), ..., \ro_n(t+1) \rangle$ and actions $ \langle \rm_1(t+1), ..., \rm_n(t+1) \rangle $, all agents receive a single team reward
{\small\begin{equation}\label{example environment noise distribution}
   r\big( \ro_1(t), ..., \ro_n(t),\rm_1(t), ..., \rm_n(t) \big)=
   \begin{cases}
    C_1, & \text{if  $P_1$}\\
    C_2, & \text{if  $P_2$},\\
    0, & \text{otherwise},\\
   \end{cases}
\end{equation}}
\sloppy where $C_1 < C_2$ and the propositions $P_1$ and $P_2$ are defined as $P_1: T\big(\ro_1(t), ..., \ro_n(t),\rm_1(t), ..., \rm_n(t)\big) \in \mathcal{S}^T - \mathcal{S}^T_n$ and $P_2: T\big(\ro_1(t), ..., \ro_n(t),\rm_1(t), ..., \rm_n(t)\big) \in \mathcal{S}^T_n$. When only a subset $\mathcal{N}',\,\, |\mathcal{N}| = n' < n$ of agent arrives at the target point $\omega^T$, the episode will be terminated with the smaller reward $C_1$ being obtained, while the larger reward $C_2$ is attained only when all agents visit the goal point at the same time. Note that this reward signal encourages coordination between agents which in turn can benefit from inter-agent communications. 

Furthermore, at each time step $t$ agents choose a communication vector to communicate with every other agent $j \in \mathcal{N}_{-i}$ by selecting a $\rc_{i,j}(t) \in \mathcal{C} = \{0,1\}^{R_{i,j}}$ of $R_{i,j}$ bits, where $R_{i,j}$ (bits per channel use / per time step) is the fixed bit-budget of the inter-agent communication channel between agent $i$ and $j$.
 The goal of the MAS is to maximize the average return by solving the D-JCCD problem (\ref{eq: Decentralized Joint Control and Communication Design (D-JCCD) problem}).

\subsection{Results}

ESAIC, SAIC, and centralized schemes are compared by their average return in Fig. \ref{fig: average return vs training iterations - SAIC ESAIC Centralized}. The figure is intended to show the applicability of the ESAIC scheme in more complex geometric consensus environments. The size of the grid world for this figure is $8 \times 8$, and the multi-agent system is composed of three agents. The figure demonstrates that the performance of ESAIC closely follows that of SAIC, with almost similar average return performance as well as the speed of convergence. The centralized scheme, which is represented by the solid black curve, achieves optimal performance but requires virtually twice the time required for the convergence of ESAIC and SAIC. Fig. \ref{fig: average return vs training iterations - SAIC ESAIC Centralized} suggests that ESAIC is a promising approach for achieving high average return performance in complex MASs, with similar performance to SAIC and faster convergence time than the centralized scheme.

    \begin{figure}[ht] 
      \centering 
          \includegraphics[width=0.6\textwidth]{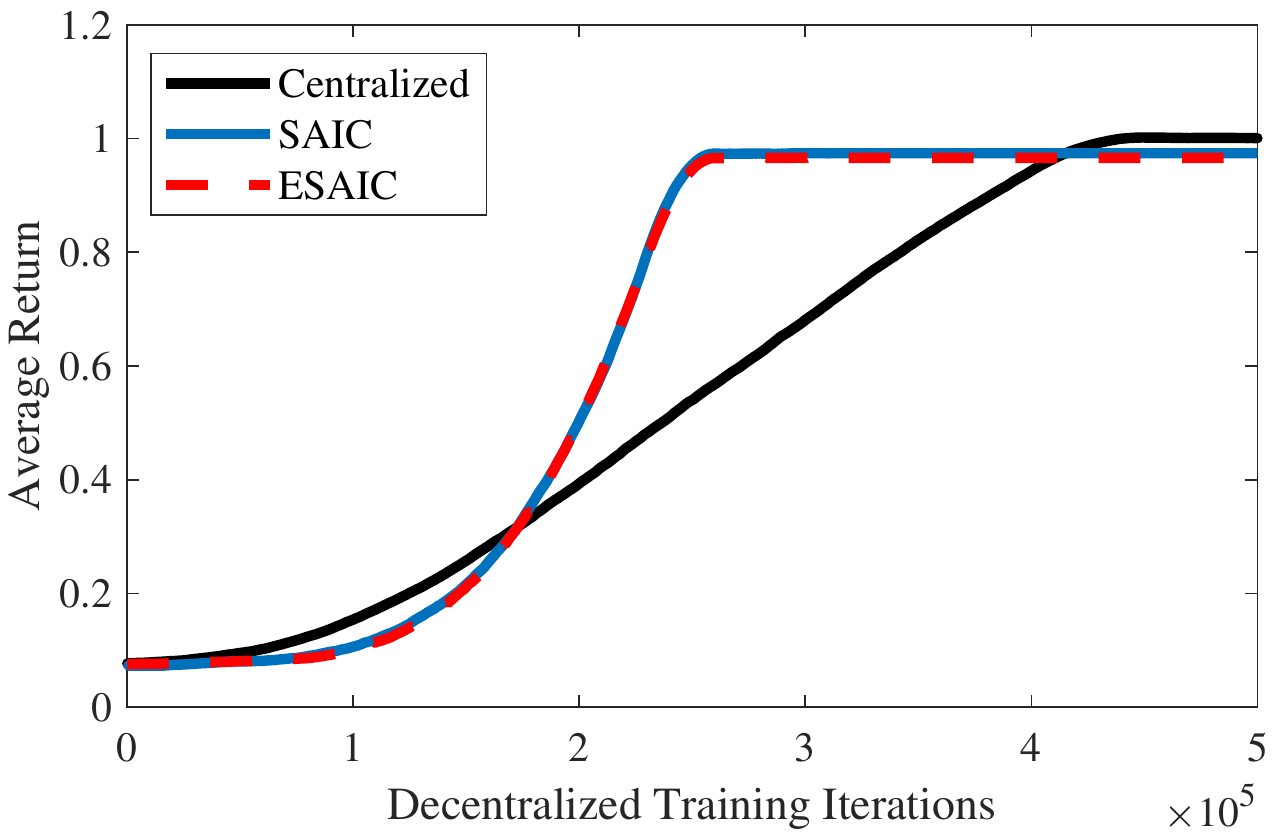} 
          \vspace{-4mm}
      \caption{Comparison of the obtained average return via SAIC and ESAIC in MAS in the decentralized training phase while the condition $c_1$ in (\ref{condition: main condition--}) is violated.}
      \label{fig: average return vs training iterations - SAIC ESAIC Centralized}
      \vspace{-0.4cm}
    \end{figure}

    Figure \ref{fig: average return vs training iterations - SAIC ESAIC Centralized} was comparing the average return performance of ESAIC against SAIC for a three-agent system. The following figure, Figure \ref{fig: return_vs_agents}, presents a similar comparison for multi-agent systems with a variable number of agents. The results shows that ESAIC achieves an average return performance that is similar to SAIC, while also offering a remarkable reduction in computational complexity. Due to its extravagant computational complexity, SAIC could not be evaluated for multi-agent systems composed of more than 4 agents. Given the exponential increase in the complexity of SAIC with respect to the number of agents, to be able to study ints performance for a 4-agent system, this figure has been plotted for the grid worlds of smaller size i.e., $3 \times 3$ across all schemes.

    \begin{figure}[ht] 
      \centering 
          \includegraphics[width=0.6\textwidth]{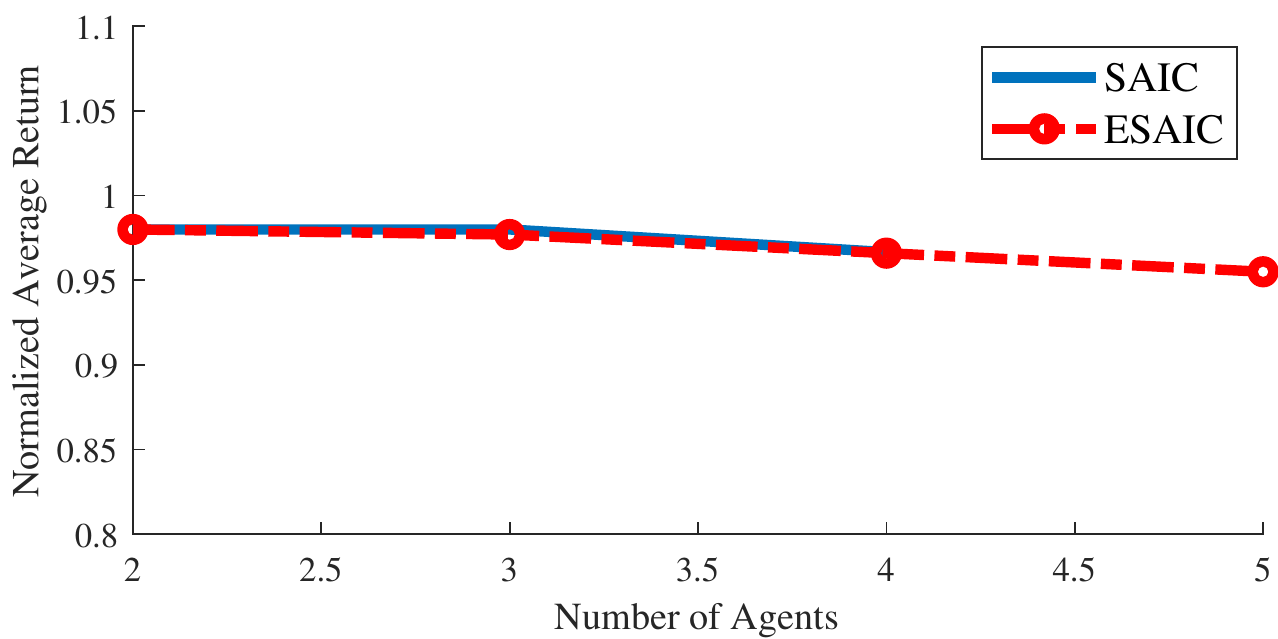} 
          \vspace{-4mm}
      \caption{Comparison of the obtained average return via SAIC and ESAIC in MAS with varying numbers of agents.}
      \label{fig: return_vs_agents}
      \vspace{-0.4cm}
    \end{figure}

    As discussed earlier in section \ref{sec: analytical studies}, SAIC suffers from prohibitively high computational complexity in its centralized training phase. ESAIC is introduced in this paper to tackle the issue of complexity in the centralized training phase by designing the communication policies only according to a two-agent centralized training. Figure \ref{fig: centralized_time} compares the run time required for the implementation of the centralized training phase in both schemes SAIC and ESAIC - both theoretically and analytically. Similar to Fig. \ref{fig: return_vs_agents}, this figure as well as the next one have been plotted for the grid worlds of smaller size i.e., $3 \times 3$ across all schemes. The analytical results reflect the explanations provided at \ref{subsect: complexity}.

    \begin{figure}[h] 
      \centering 
          \includegraphics[width=0.6\textwidth]{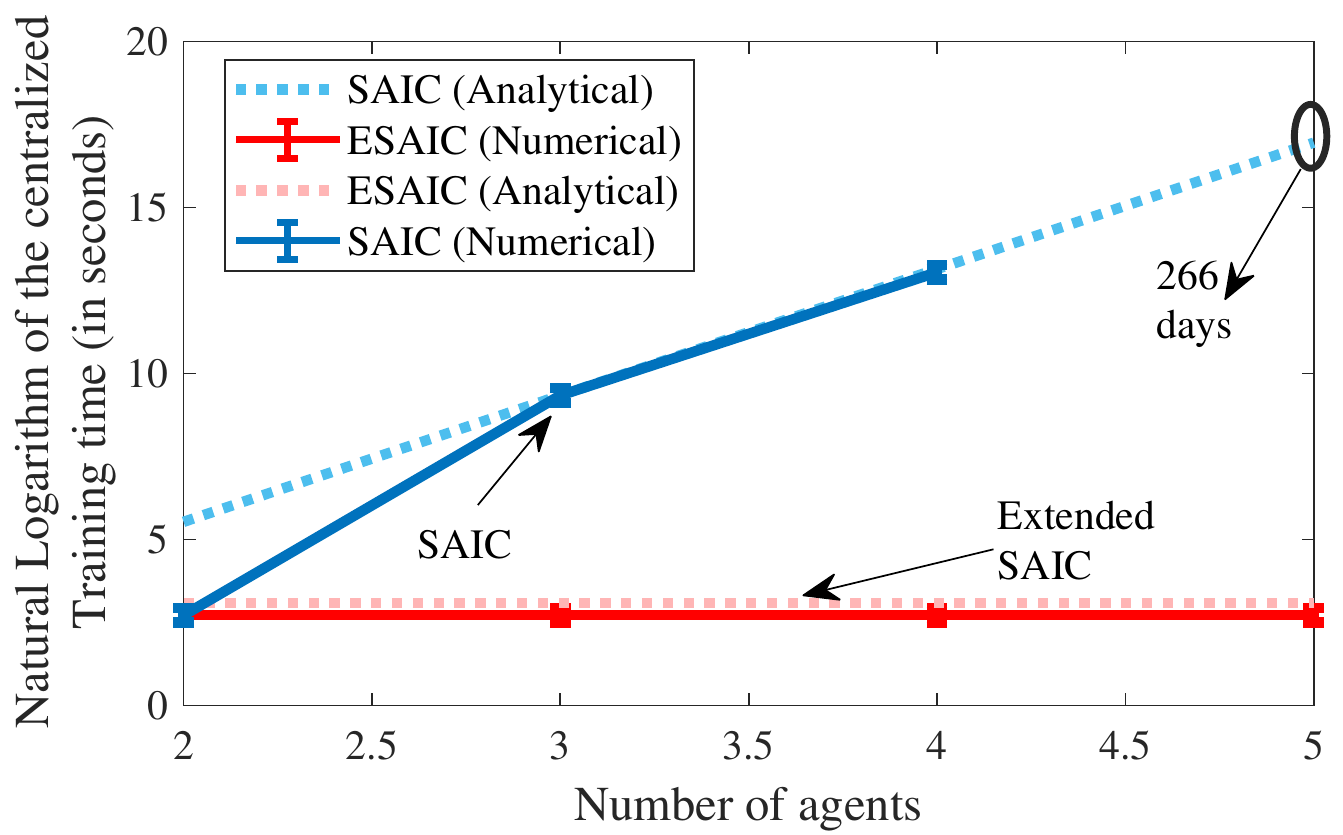} 
          \vspace{-3mm}
      \caption{Comparison of the average time required to carry out the centralized training phase in both algorithms SAIC and ESAIC.}
      \label{fig: centralized_time}
      \vspace{-0.4cm}
    \end{figure}

    To realize the end-to-end time required for the training of both algorithms, Fig. \ref{fig: end to end_time} is brought. This figure illustrates the combined time required to carry out the centralized as well as the decentralized training phase. Inter-agent communications are considered to be $R_{i,j} = 2$ (bits per channel use) across all agents $\forall i, j \in \mathcal{N}$. With an increase in the number of agents, the size of the received communication message space $\mathcal{C}^{n-1}$ increases exponentially leading to an increase in the end-to-end complexity of both algorithms SAIC and ESAIC. Nevertheless, the goal of solving the problems (\ref{Task-Based Information Compression}) and (\ref{Task-Based Information Compression - ESAIC}) is to significantly reduce the size of each agent's communication transmission space $\mathcal{C}$ compared with the observation space $\Omega$. Accordingly, the exponential increase in the size of received communication space has a much less pronounced impact on the overall complexity of both algorithms. Yet, we expect the size of the received message space $\mathcal{C}^{n-1}$ to be another bottleneck of SAIC that ESAIC can not solve. This bottleneck gets more serious when the number of agents goes double digits. The analytical results reflect the explanations provided at \ref{subsect: complexity}.

    \begin{figure}[h] 
      \centering 
          \includegraphics[width=0.6\textwidth]{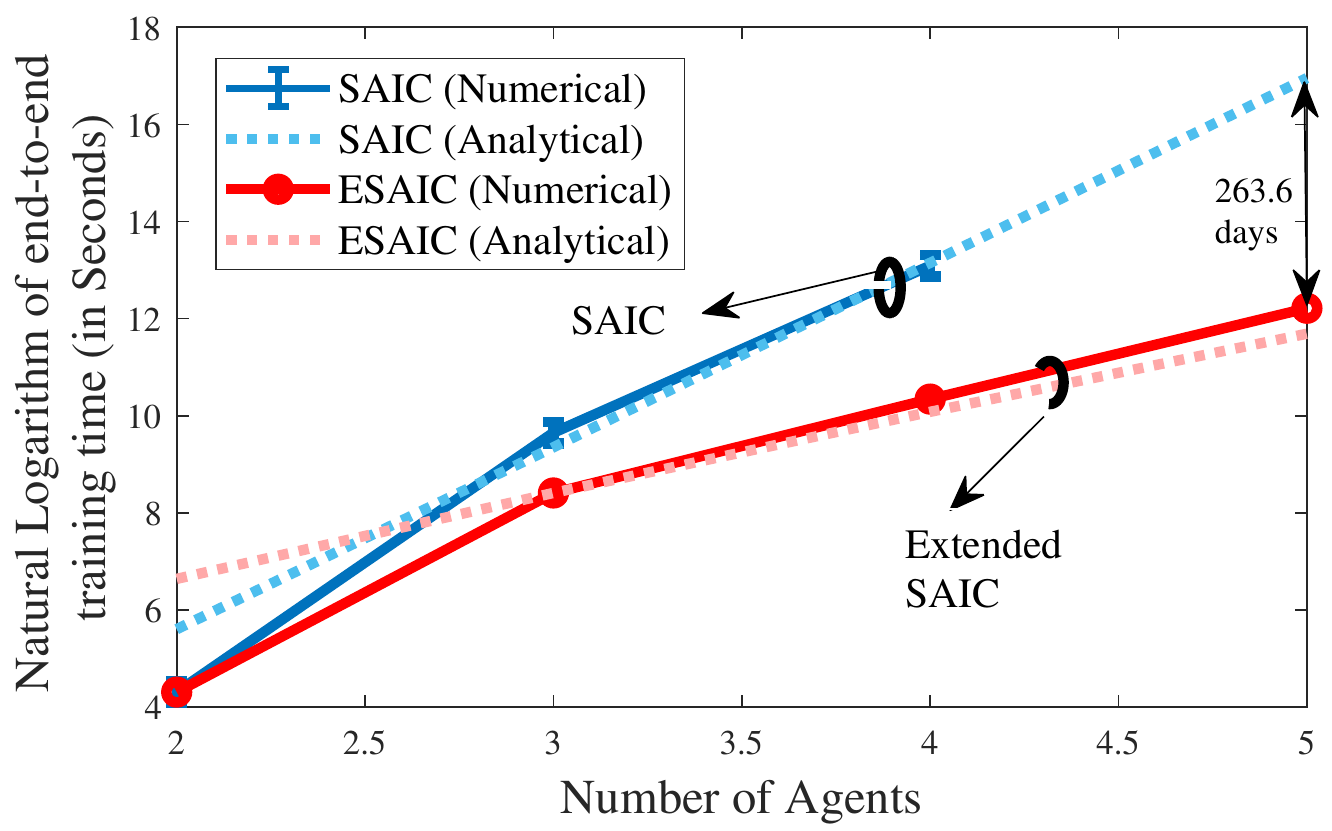} 
          \vspace{-4mm}
      \caption{Comparison of the average time required to carry out end-to-end training in both algorithms SAIC and ESAIC.}
      \label{fig: end to end_time}
      \vspace{-0.3cm}
    \end{figure}

    To show that both SAIC and ESAIC can perform well even under heterogeneous bit-budgets, Fig. \ref{fig: heterogeneous bit-budgets} is obtained. This figure studies the average return performance of ESAIC - which is equivalent to that of SAIC for a two-agent system - in an $8 \times 8 $ grid world with heterogeneous bit-budgets for agents. We observe near-optimal performance for both schemes for all heterogeneous rates $R_{i,j} > 2$.

    \begin{figure}[h] 
      \centering 
          \includegraphics[width=0.6\textwidth]{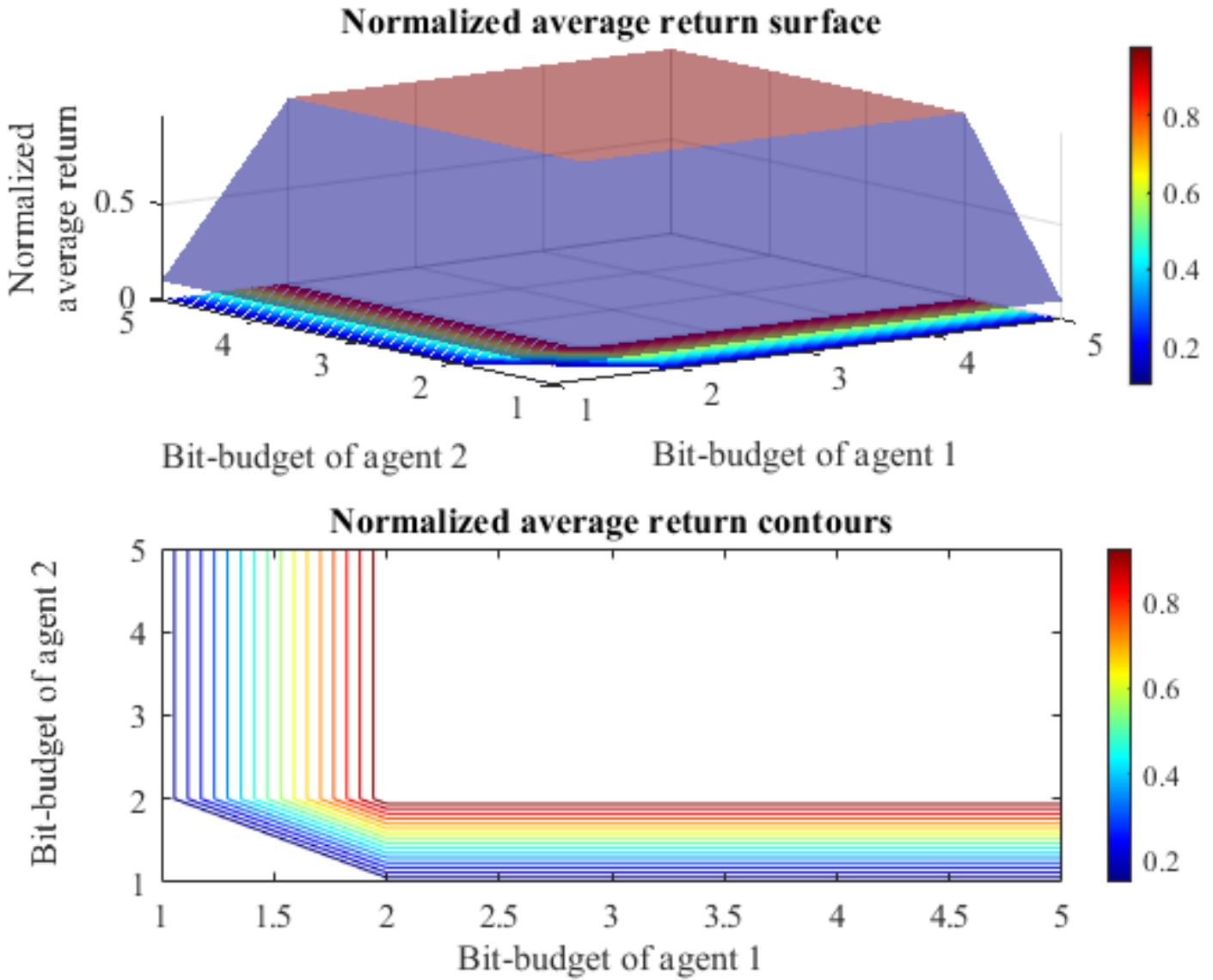} 
          \vspace{-4mm}
      \caption{Normalized average return of a two-agent system when ESAIC is applied under heterogeneous bit-budgets.}
      \label{fig: heterogeneous bit-budgets}
      \vspace{-0.4cm}
    \end{figure}

    \vspace{-0.4cm}
    \section{Conclusion}
    \vspace{-0.4cm}
    This paper presented a novel scalable task-oriented quantization algorithm for multi-agent communications over bit-budgeted channels. The proposed algorithm, ESAIC, offers a unique approach to designing the communications of a multi-agent system, regardless of the number of agents involved. The two-agent centralized training phase used in the algorithm has been shown to be effective in designing abstract communications and obtaining near-optimal average returns. We have also demonstrated that any approximate/deep reinforcement learning scheme can be used in the centralized training phase without affecting the reported results - making the results of the paper more applicable to real-world scenarios. The results of our analytical analysis are strong evidence for the effectiveness of the proposed algorithm. The main theorem proposed by the paper offers a solid foundation for future research as we expect the condition of the theorem to be further relaxed in future works, leading to even more efficient and effective communication designs for large multi-agent systems. We believe that the proposed algorithm, ESAIC, has significant implications for the design of communication systems in multi-agent systems, with potential applications in areas such as autonomous vehicles, robotics, and wireless sensor networks.

\bibliographystyle{IEEEtran}
\vspace{-8mm}
{\small
\bibliography{Bibfile}}
\vspace{-4mm}
\appendices





\vspace{-2mm}
\section{Proof of Theorem \ref{theorem: main theorem}} \label{app: proof: theorem: theorem: main theorem}
\vspace{-4mm}
To prove this theorem we first introduce a lemma together with its proof in subsection \ref{lemma: instrumental}. We then use the result of this lemma in subsection \ref{app:subsect: lemma: oracle} to see how one can design the partition $\mathcal{P}^{[N]}_{i,j}$, after a two-agent centralized training but given the help of an oracle that knows a specific function $f(\cdot)$. Subsequently, we complete the proof of Theorem \ref{theorem: main theorem}, in subsection \ref{app: subsect: proof: theorem: main theorem} leveraging the above-mentioned lemmas with no further need to the knowledge of the function $f(\cdot)$.
\vspace{-3mm}
\subsection{An Instrumental Lemma} \label{lemma: instrumental}
\vspace{-3mm}
\begin{lemma} \label{lemma: function acting over a set}
Let $\mathcal{A} \subset \mathbb{R}$ and $\mathcal{B}\subset \mathbb{R}$ be discrete sets, $h(\cdot) : \mathbb{R} \rightarrow \mathbb{R}$ be a bijection, $\Ddot{h}(\mathcal{A}) $
be a discrete set and $c \in \mathbb{R}$ be a constant value. We can state
\vspace{-5mm}
\small{ \begin{align}
   \Ddot{h}(\mathcal{A}) = \mathcal{B} \implies \sum_{a' \in h(\mathcal{A})} |a' - c| = \sum_{b \in \mathcal{B}} |b - c|.
\end{align} }
\begin{proof}
\vspace{-8mm}
    \small{ \begin{align}
     &    \forall \, b \in \mathcal{B} \,\, \exists a\in \mathcal{A} : h (a) = b,
    \end{align} }
    where adding the constant value $-c$ to the side of equality and applying the absolute value will result in
    \vspace{-7mm}
    \small{ \begin{align}
        \forall \, b \in \mathcal{B} \,\, \exists a\in \mathcal{A} : |h (a) - c| = |b - c|.
    \end{align} }
    Which is equivalent to 
    \vspace{-7mm}
    \small{ \begin{align}
        \forall \, b \in \mathcal{B} \,\, \exists a'\in \ddot{h}(\mathcal{A}) : a' - c = b - c,
    \end{align} }
    according to the definition of $\ddot{ h}(\mathcal{A})   \triangleq \{ a' : h^{-1}(a') \in \mathcal{A} \} $.
    Performing a summation across all the elements of $\ddot{ h}(\mathcal{A})$ and $\mathcal{B}$ will result in
    \vspace{-4mm}
    \small{ \begin{align}
        \sum_{a' \in \ddot{ h}(\mathcal{A})} |a' - c| = \sum_{b \in \mathcal{B}} |b - c|.
    \end{align} }
\end{proof}
\end{lemma}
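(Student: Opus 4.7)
The plan is to prove the identity directly by showing that the two sums are, literally, sums over the same set. First I would invoke the definition of the image function given in the preliminaries of the paper, namely $\Ddot{h}(\mathcal{A}) = \{ h(a) : a \in \mathcal{A} \}$, and combine it with the hypothesis $\Ddot{h}(\mathcal{A}) = \mathcal{B}$. This immediately yields the set identity $\{ h(a) : a \in \mathcal{A} \} = \mathcal{B}$, so that every $b \in \mathcal{B}$ is of the form $h(a)$ for some $a \in \mathcal{A}$, and every $h(a)$ lies in $\mathcal{B}$.

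Next I would use the injectivity half of the bijectivity hypothesis on $h$ to argue that the preimage $a$ associated with each $b \in \mathcal{B}$ is unique, so the correspondence $a \leftrightarrow h(a)$ is a bijection between $\mathcal{A}$ and $\Ddot{h}(\mathcal{A}) = \mathcal{B}$. Because the sets are discrete, the sums $\sum_{a' \in \Ddot{h}(\mathcal{A})} |a' - c|$ and $\sum_{b \in \mathcal{B}} |b - c|$ are both well-defined re-indexings of the same (multi)set of real terms $\{ |h(a) - c| : a \in \mathcal{A} \}$, each element appearing exactly once on each side. Term-by-term identification then gives the equality in the conclusion. (I would also note the mild notational slip in the statement: the set $h(\mathcal{A})$ appearing under the sum is nothing other than $\Ddot{h}(\mathcal{A})$, so the equality can be read either way.)

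I do not anticipate any real obstacle; the content of the lemma is essentially that ``summing a function of the elements of a set does not depend on how the set is labeled.'' The only step that deserves a moment's care is confirming that the discreteness of $\mathcal{A}$, $\mathcal{B}$, and $\Ddot{h}(\mathcal{A})$ is what makes the sums well-defined in the first place, and that the bijection $h|_{\mathcal{A}}$ prevents any double-counting when we re-index from $a$ to $a' = h(a)$. Note that the absolute value plays no special role here: the same argument shows $\sum_{a' \in \Ddot{h}(\mathcal{A})} \phi(a') = \sum_{b \in \mathcal{B}} \phi(b)$ for any real-valued $\phi$, which is why the lemma will be applicable later with $\phi(\cdot) = |\cdot - c|$ taken inside the clustering objectives of problems (\ref{Task-Based Information Compression}) and (\ref{Task-Based Information Compression - ESAIC}).
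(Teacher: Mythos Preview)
Your proposal is correct and follows essentially the same approach as the paper: both arguments boil down to observing that the hypothesis $\Ddot{h}(\mathcal{A}) = \mathcal{B}$ makes the two sums range over the identical set of real numbers, so the equality is immediate. Your version is in fact slightly cleaner, since you pass directly through the set identity rather than first establishing an element-wise correspondence and then summing, and your remarks about injectivity preventing double-counting and the irrelevance of the particular form $|\cdot - c|$ are accurate refinements.
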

\vspace{-10mm}
\subsection{If an oracle tells us $f(\cdot)$} \label{app:subsect: lemma: oracle}

\begin{lemma} \label{lemma: equivalence of SAIC and ESAIC - in rate constant comms}
 Let the bijection $f(\cdot): \mathcal{V}^{[2]} \rightarrow \mathcal{V}^{[N]} $ be the mapping from the value of observations for a two-agent scenario to the $N$-agent. For all $i,j \in \mathcal{N}$, the partition $\mathcal{P}^{[2]}_{i,j}$ proposed by ESAIC are the same as the partition $\mathcal{P}^{[N]}_{i,j}$ proposed by SAIC if the function $f(\cdot)$ is known and if
 \vspace{-4mm}
 \small{ \begin{align}
     &  c_1: \label{condition: main condition}
     & & \forall k \in \{1,\dots, B_{i,j}\} \,\,\, \exists k' \in \{1,\dots, B_{i,j}\}  : \\
     & & & \Ddot{f}(\mathcal{P}'^{[2]}_{i,j,k}) = \mathcal{P}'^{[N]}_{i,j,k'} \notag 
 \end{align} }
\end{lemma}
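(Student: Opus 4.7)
The plan is to reduce the claimed equality of observation-space partitions to a statement purely about value-space partitions, and then to close the argument by combining Condition $c_1$ with the bijectivity of $f$. Lemma~\ref{lemma: function acting over a set} will be used in a supporting role, to verify that the K-median subproblems solved by SAIC in (\ref{Task-Based Information Compression}) and by ESAIC in (\ref{eq: ESAIC: TOCD}) are related by $f$ in a cost-preserving way, so that Condition $c_1$ is a consistent, non-vacuous demand rather than an empty hypothesis.

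First, I would recall from the construction introduced just after (\ref{eq: ESAIC: TOCD}) that each observation-space partition is the pullback of its value-space partition under the corresponding value function, i.e., $\mathcal{P}^{[2]}_{i,j,k}=\{\ro\in\Omega:V^{*[2]}(\ro)\in\mathcal{P}'^{[2]}_{i,j,k}\}$ and $\mathcal{P}^{[N]}_{i,j,k'}=\{\ro\in\Omega:V^{*[N]}(\ro)\in\mathcal{P}'^{[N]}_{i,j,k'}\}$. Since by definition of the bijection $f$ we have $V^{*[N]}(\ro)=f\bigl(V^{*[2]}(\ro)\bigr)$ for every $\ro\in\Omega$, Condition $c_1$ supplies, for each $k$, an index $k'$ with $\Ddot{f}(\mathcal{P}'^{[2]}_{i,j,k})=\mathcal{P}'^{[N]}_{i,j,k'}$, and the chain
\[
\ro\in\mathcal{P}^{[2]}_{i,j,k}\iff V^{*[2]}(\ro)\in\mathcal{P}'^{[2]}_{i,j,k}\iff f\bigl(V^{*[2]}(\ro)\bigr)\in\Ddot{f}(\mathcal{P}'^{[2]}_{i,j,k})=\mathcal{P}'^{[N]}_{i,j,k'}\iff \ro\in\mathcal{P}^{[N]}_{i,j,k'}
\]
follows from the bijectivity of $f$. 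This already proves the desired equality of the two observation-space partitions, up to the index relabeling $k\mapsto k'$ selected by $c_1$.

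To ensure that $c_1$ is not vacuous, the second step of the plan is to check that ESAIC's and SAIC's clustering problems admit a pair of optimizers actually related by $f$. Applying Lemma~\ref{lemma: function acting over a set} with $h=f$, $\mathcal{A}=\mathcal{P}'^{[2]}_{i,j,k}$ and $\mathcal{B}=\Ddot{f}(\mathcal{P}'^{[2]}_{i,j,k})$ yields $\sum_{v\in\Ddot{f}(\mathcal{P}'^{[2]}_{i,j,k})}|v-c|=\sum_{b\in\mathcal{P}'^{[N]}_{i,j,k'}}|b-c|$ for any candidate center $c$. Summing over $k$ then shows that the SAIC objective (\ref{Task-Based Information Compression}) evaluated on $\{\Ddot{f}(\mathcal{P}'^{[2]}_{i,j,k})\}_k$ with centers $\{\mu'_k\}$ equals the ESAIC objective (\ref{eq: ESAIC: TOCD}) evaluated on $\mathcal{P}'^{[2]}_{i,j}$ with the same centers; hence the $f$-image of any ESAIC optimizer is an optimizer for SAIC, and $c_1$ is precisely the relation between such matched optimizers.

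The main obstacle I anticipate is conceptual rather than technical: both K-median subproblems may possess multiple optima, so care is needed to phrase the equality in terms of a particular pair of $f$-related optimizers, rather than claiming that every optimizer of one problem maps to every optimizer of the other. Condition $c_1$ is exactly what makes this pairing explicit, and once it is invoked, the unfolding in the second paragraph closes the argument with no further technical content.
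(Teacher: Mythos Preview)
Your core argument—the chain of bi-implications pulling the value-space clusters back to $\Omega$ through $V^{*[2]}$ and $V^{*[N]}=f\circ V^{*[2]}$—is correct and proves the lemma cleanly. It is, in fact, essentially the set-theoretic preimage argument the paper deploys later for Theorem~\ref{theorem: main theorem} (there via inverse images and the identity $\ddot g^{-1}(\ddot g(\mathcal E'))=\mathcal E'$), so you have effectively established the stronger statement without invoking the oracle at all. The paper's own proof of the present lemma takes a different route: it uses the oracle's knowledge of $f$ to replace ESAIC's clustering criterion by $\sum_{k'}\sum_{v^{[2]}\in\mathcal P'^{[2]}_{i,j,k'}}|f(v^{[2]})-\mu_{k'}|$, and then applies Lemma~\ref{lemma: function acting over a set} together with $c_1$ to identify this term-by-term with the SAIC objective, concluding that the two optimization problems coincide. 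Your approach trades this objective-level identification for a direct preimage computation; both are valid, but yours is more self-contained and exposes that the ``oracle'' hypothesis is in fact superfluous once $c_1$ is assumed.

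One caution about your consistency-check paragraph: it is circular and contains a mis-statement. You invoke $c_1$ to name $\mathcal P'^{[N]}_{i,j,k'}$ in the very equality that is supposed to certify $c_1$'s non-vacuity; and the claim that the SAIC objective on $\{\ddot f(\mathcal P'^{[2]}_{i,j,k})\}_k$ equals the \emph{ESAIC} objective on $\mathcal P'^{[2]}_{i,j}$ is false in general, since a bijection $f$ need not preserve $|v-c|$. What your summation actually shows (given $c_1$) is that the SAIC objective on the $f$-image of the ESAIC value-partition equals the SAIC objective on the SAIC value-partition---not the ESAIC objective. This paragraph is not load-bearing, so the overall proof stands; you may simply delete it.
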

\begin{proof}
    We start by a clustering problem over the space of values $\mathcal{V}^{[2]}$ that is obtained by ESAIC and we show the problem as well as its solution are equivalent to the problem that is solved by SAIC.
    
    Given the help of an oracle, we know the function $f(\cdot)$ and thus, after obtaining the values $ V^{* \, [2]}(\ro) =  v^{[2]} \in \mathcal{V}^{[2]}$ by solving the centralized two agent problem, we proceed by obtaining the solution for
    \vspace{-7mm}
    \small{ \begin{align}\label{eq: clustering in ESAIC - v2}
        & \underset{\mathcal{P}'}{\text{argmin }}
        & & \sum_{k' = 1}^{2^R} \,\,\, \sum_{v^{[2]} \in \mathcal{P}^{'  \,\,\, [2]}_{i,j,k'}} | f(v^{[2]}) - \mu_{k'} |.
    \end{align} }
    We also know from (\ref{condition: main condition}) and lemma \ref{lemma: function acting over a set} that for any $k' \in \{1,...,2^R\}$ there is a $k \in \{1,...,2^R\}$ such that
    \vspace{-7mm}
    \small{ \begin{align}\label{eq: equality}
        \sum_{v^{[2]} \in \mathcal{P}^{'  \,\,\, [2]}_{i,j,k'}} | f(v^{[2]}) - \mu_{k'} | = 
        \sum_{v^{[N]} \in \mathcal{P}^{'  \,\,\, [N]}_{i,j,k}} | v^{[N]} - \mu_{k} |. 
    \end{align} }
    By replacing the right-hand term in equality (\ref{eq: equality}) with the inner summation of eq. (\ref{eq: clustering in ESAIC - v2}), we will arrive at
    \vspace{-7mm}
    \small{ \begin{align} \label{eq: SAIC clustering--}
        & \underset{\mathcal{P}}{\text{argmin }}
        & & \sum_{k = 1}^{2^R} \,\,\, \sum_{ v^{[N]} \in \mathcal{P}'^{[N]}_{i,j,k}} | v^{ [N]} - \mu_{k} |,
    \end{align} }
    and since problem (\ref{eq: SAIC clustering--}) is the exact problem that is solved by SAIC, the proof is concluded.

\end{proof}

\vspace{-10mm}
\subsection{Proof of theorem \ref{theorem: main theorem}} \label{app: subsect: proof: theorem: main theorem}
\begin{proof}

    As we assume that we have no knowledge about the function $f(\cdot)$, after obtaining the optimal value function $V^{* [2]}(\cdot)$, we will solve the clustering problem as if we are designing a communication policy for a two-agent system by SAIC. Accordingly, we will have to solve
    \vspace{-4mm}
    {\small \begin{align}\label{eq: clustering in ESAIC - v1}
        & \underset{\mathcal{P}'}{\text{argmin }}
        & & \sum_{k' = 1}^{2^R} \,\,\, \sum_{v^{[2]} \in \mathcal{P}^{'  \,\,\, [2]}_{i,j,k'}} | v^{[2]} - \mu_{k'} |.
    \end{align}}
     Given eq. (\ref{eq: equality}) and lemma \ref{lemma: function acting over a set}, we know that for any $k' \in \{1,...,2^R\}$ there is a $k \in \{1,...,2^R\}$ such that
     \vspace{-7mm}
    {\small \begin{align}\label{eq: equality - v2}
        \sum_{v^{[2]} \in \mathcal{P}^{'  \,\,\, [2]}_{i,j,k'}} | f^{-1}\big( f(v^{[2]}) \big) - \mu_{k'} | = 
        \sum_{v^{[N]} \in \mathcal{P}^{'  \,\,\, [N]}_{i,j,k}} | f^{-1} (v^{[N]}) - \mu_{k} |. 
    \end{align} }
        Be reminded that the inner summation of eq. (\ref{eq: clustering in ESAIC - v1}) is equal to the left-hand term in equality (\ref{eq: equality - v2}). By replacing the right-hand term in equality (\ref{eq: equality - v2}) with the inner summation of eq. (\ref{eq: clustering in ESAIC - v1}), we will arrive at
    \vspace{-4mm}
    {\small \begin{align} \label{eq: SAIC clustering}
        & \underset{\mathcal{P}}{\text{argmin }}
        & & \sum_{k = 1}^{2^R} \,\,\, \sum_{ v^{[N]} \in \mathcal{P}^{[N]}_{i,j,k}} | f^{-1} \big( v^{ [N]}\big) - \mu_{k}  |.
    \end{align} }
    The inner summation of eq. (\ref{eq: SAIC clustering}) can also be taken over the observation space as the following
    \vspace{-4mm}
    {\small \begin{align} \label{eq: SAIC clustering--}
        & \underset{\mathcal{P}}{\text{argmin }}
        & & \sum_{k = 1}^{2^R} \,\,\, \sum_{ \ro \in \mathcal{P}^{[N]}_{i,j,k}} | f^{-1} \big( V^{* [N]}(\ro)\big) - \mu_{k}  |,
    \end{align} }
    where by applying the function $f(\cdot)$, according to the lemma \ref{lemma: function acting over a set}, we will get
    \vspace{-4mm}
    {\small \begin{align} \label{eq: SAIC clustering---}
        & \underset{\mathcal{P}}{\text{argmin }}
        & & \sum_{k' = 1}^{2^R} \,\,\, \sum_{ \ro \in \mathcal{P}^{[2]}_{i,j,k'}} | V^{* [2]}(\ro) - \mu_{k'}  |.
    \end{align} }

    
    From this point onward using a set of known relationships we will try to find the relationship between $\mathcal{P}^{[2]}_{i,j,k'} $ and $ \mathcal{P}^{[N]}_{i,j,k}$. It will be demonstrated that $\mathcal{P}^{[2]}_{i,j,k'} = \mathcal{P}^{[N]}_{i,j,k}$ where $k$ is the same index that allows the equality $\ddot{ f}(\mathcal{P'}^{[2]}_{i,j,k'}) =  \mathcal{P'}^{[N]}_{i,j,k}$ to hold. Then by running the inner summation of eq. \ref{eq: SAIC clustering} over all the elements of the set $ \mathcal{P'}^{[N]}_{i,j,k}$ - instead of $ \mathcal{P'}^{[2]}_{i,j,k'}$ - the proof will be concluded.
    
    We know that the the set $\mathcal{P}^{[2]}_{i,j,k'}$ corresponds to $\mathcal{P}'^{[2]}_{i,j,k'}$, i.e.,
    \vspace{-4mm}
    {\small \begin{align} \label{eq: from observation clusters v[2] to value v[2] clusters}
        \ddot{ V}^{*\,[2]}\big( \mathcal{P}^{[2]}_{i,j,k'} \big) = \mathcal{P'}^{[2]}_{i,j,k'}.
    \end{align} }
    Despite being a surjection, we define the \textit{inverse image} of $V^{*\,[2]}(\cdot)$ to be 
    \vspace{-4mm}
    {\small \begin{align} \label{eq: mapping observations to v[2] values}
           \big[ V^{*\,[2]} \big]^{-1}(v^{[2]}) \triangleq \{ \, \ro \, | \,  V^{*\,[2]}(\ro) = v^{[2]} \},
    \end{align} }
    such that
    \vspace{-8mm}
    {\small \begin{align} \label{eq: mapping observation space to v[2] value space}
        \big[\ddot{ V}^{*\,[2]} \big]^{-1}\big( \mathcal{P}'^{[2]}_{i,j,k'}\big) = \mathcal{P}^{[2]}_{i,j,k'} .
    \end{align} }
     Similar to (\ref{eq: mapping observations to v[2] values}) we define the inverse of $V^{*\,[N]}(\cdot)$ to be
     \vspace{-4mm}
    {\small \begin{align} \label{eq: mapping observations to v[N] values}
           \big[ V^{*\,[N]} \big]^{-1}(v^{[N]}) = \{ \, \ro \, | \,  V^{*\,[N]}(\ro) = v^{[N]} \},
    \end{align}}
    such that
    \vspace{-4mm}
    {\small \begin{align} \label{eq: mapping observation space to v[N] value space}
        \big[\ddot{ V}^{*\,[N]} \big]^{-1}\big( \mathcal{P}'^{[N]}_{i,j,k}\big) = \mathcal{P}^{[N]}_{i,j,k} .
    \end{align} }
         To show the equivalnce of $\mathcal{P}^{[2]}_{i,j,k'} = \mathcal{P}^{[N]}_{i,j,k}$, we will show that they both correspond to the same value cluster.  
        Further to the (\ref{condition: main condition}), we know that there exit a $k$ such that $\ddot{ f}(\mathcal{P'}^{[2]}_{i,j,k'}) =  \mathcal{P'}^{[N]}_{i,j,k}$. This together with the eq. (\ref{eq: mapping observation space to v[N] value space}) and (\ref{eq: from observation clusters v[2] to value v[2] clusters}) implies that
        \vspace{-4mm}
        {\small \begin{align}
           \big[\ddot{ V}^{*\,[N]} \big]^{-1} \Big( \ddot{f} \big( \ddot{ V}^{*\,[2]}( \mathcal{P}^{[2]}_{i,j,k'} ) \big)  \Big) = \mathcal{P}^{[N]_{i,j,k}}.
        \end{align}}
        In the following we will show that $ \big[\ddot{ V}^{*\,[N]} \big]^{-1} \Big( \ddot{f} \big( \ddot{ V}^{*\,[2]}( \cdot ) \big)  \Big)$ is an identity image function; i.e.,
        \vspace{-4mm}
        {\small \begin{align}\label{eq: fog-1 for function images}
            \big[\ddot{ V}^{*\,[N]} \big]^{-1} \Big( \ddot{f} \big( \ddot{ V}^{*\,[2]}( \Omega' \subset \Omega ) \big)  \Big) = \Omega'.
        \end{align}}
        Since we know from the assumptions of lemma \ref{lemma: equivalence of SAIC and ESAIC - in rate constant comms} that $ V^{*\,[N]} (\ro) = f\big(V^{*\,[2]} (\ro)\big) \;\, \forall o \in \Omega$, it could be trivial to show the correctness of $V^{{*\,[N]}^{-1}} \Big( f\big(V^{*\,[2]} (\ro)\big) \Big) = \ro \;\, \forall o \in \Omega$ or (\ref{eq: fog-1 for function images}); but it is not the case where the functions involved are not bijections and when instead of functions we have image functions. Yet, to prove (\ref{eq: fog-1 for function images}), it is sufficient to prove that for all $g(\cdot): \mathcal{E} \rightarrow \mathcal{F}$ and $\mathcal{E}' \subset \mathcal{E}$ we can state that
        \vspace{-6mm}
        {\small\begin{align}
            \ddot{g}^{-1} \Big( \ddot{g} \big( \mathcal{E}' \big) \Big) = \mathcal{E}'.
        \end{align}}
        In the following, we prove the above-mentioned - while a similar statement is can be found in \cite{topology2023Munkres}(page 17) without proof given that $[V^{{*\,[N]}}]^{-1}(\cdot)$ is a surjective mapping.
        According to the definition of image functions, for every image function $g(\cdot): \mathbb{P}(\mathcal{E}) \rightarrow \mathbb{P}(\mathcal{F})$ and $\mathcal{E}' \subset \mathcal{E}$ we have
        \vspace{-6mm}
        {\small \begin{align} \label{eq: inverse definition}
            \epsilon'' \in \ddot{g}^{-1}(\mathcal{E}'') \iff \ddot{g}(\epsilon'') \in \mathcal{E}''.
        \end{align}}
        To show the equality of the set $\ddot{g}^{-1} \Big( \ddot{g} \big( \mathcal{E}' \big) \Big)$ and $\mathcal{E}'$ we now have to prove that
        \vspace{-4mm}
        {\small\begin{align} \label{eq: inverse proof 1}
            & \epsilon' \in \mathcal{E}' 
            & & \iff \epsilon' \in \ddot{g}^{-1} \Big( \ddot{g} \big( \mathcal{E}' \big) \Big)
        \end{align}}
        Consider eq. (\ref{eq: inverse definition}) and replace the $\ddot{g} \big( \mathcal{E}' \big)$ in eq. (\ref{eq: inverse proof 1}) with $\mathcal{E}''$ of eq. (\ref{eq: inverse definition})
        \vspace{-4mm}
        {\small\begin{align}
             & & & \iff \ddot{g} (\epsilon') \in  \ddot{g} \big( \mathcal{E}' \big) 
             & & & \iff \epsilon' \in \mathcal{E}'.
        \end{align}}
\end{proof}

\vspace{-12mm}
\section{Proof of Lemma \ref{lemma: verfiablity}} \label{appendix: proof: lemma: verifiability}
\vspace{-4mm}
Without loss of generality, instead of proving Lemma 3 for strictly monotonic functions, we only prove it for the strictly increasing functions. The strictly decreasing functions also can be treated in a similar manner. Then, Lemma 3 for strictly monotonic functions $f^{[k]}(\cdot)$ is automatically deduced. So, we consider the case $f^{[3]}(\cdot)$ is strictly increasing and we prove $f^{[N]}(\cdot)$, $N > 3$ is also strictly increasing. This is done via induction.
\begin{proof}
    \textbf{Base case:} We assume that $P(3)$ has been verified.

    \textbf{Induction step:} We will show that for every $k > 3$, if $P(k)$ holds, then $P(k + 1)$ also holds. Equivalently, 
\vspace{-4mm}
{\small \begin{align} \label{eq: induction: hypothesis-1}
& P(k):
&&        \text{ If } V^{[2]}(\ro') > V^{[2]}(\ro) \implies \\
& && f^{[k]}\big(V^{[2]}(\ro')\big) > f^{[k]}\big(V^{[2]}(\ro)\big), ~ \forall \ro \in \Omega,
\end{align}}
which can also be expressed as
\vspace{-4mm}
{\small\begin{align} \label{eq: induction: hypothesis}
& P(k):
&&        \text{ If } V^{[2]}(\ro') > V^{[2]}(\ro) \implies V^{[k]}(\ro') > V^{[k]}(\ro), ~ \forall \ro \in \Omega.
\end{align}}

    To prove the statement $P(k+1)$, we represent $V^{[k+1]}(\cdot)$ in terms of $V^{[k]}(\cdot)$ via eq. \eqref{eq: vk} - (\ref{eq: relationship of vk+1 and vk}), and we will use this relationship to deduce the induction step given (\ref{eq: induction: hypothesis}) holding. 
  
The value of observation $\ro \in \Omega$ in an $k$ agent system is expressed by
\vspace{-4mm}
{\small\begin{equation} \label{eq: vk}
    V^{[k]}(\ro_i = \ro) = \sum_{\ro_{-i}\in \Omega^{k-1}} p(\ro_{-i} = [\ro_j]_{j \in \mathcal{N}_{-i}} ) \mathbb{E} \{ \bg^{[k]} | \ro_i , \ro_{-i}\},
\end{equation}}
\sloppy where the set of agents $\mathcal{N} = \{1,2,...,k\}$ is comprised of $k$ elements and $\bg^{[n]}(t^{'})= {\sum}_{t=t^{'}}^{T'}\gamma^{t-1} r^{[n]}\big(\bs(t),\bm(t)\big)$.
Similarly, the value of observation $\ro \in \Omega$ in an $k+1$ agent system is expressed by
\vspace{-4mm}
{\small\begin{equation}
    V^{[k+1]}(\ro_i = \ro) = \sum_{\ro_{-i}\in \Omega^{k}} p(\ro_{-i} = [\ro_j]_{j \in \mathcal{N'}_{-i}} ) \mathbb{E} \{ \bg^{[k+1]} | \ro_i , \ro_{-i}\},
\end{equation}}
where the set of agents $\mathcal{N'} = \{1,2,...,k+1\}$ is comprised of $k+1$ elements. Taking condition $c_2$ into account, it can be readily shown that
\vspace{-4mm}
{\small\begin{equation}\label{eq: value linearized}
    V^{[k+1]}(\ro_i = \ro) = \!\! \sum_{\ro_{-i}\in \Omega^{k}}  \! p(\ro_{-i} = [\ro_j]_{j \in \mathcal{N'}_{-i}} ) \mathbb{E} \{ \tau \bg^{[k]} + \zeta | \ro_i , \ro_{-i}\}.
\end{equation}}
By expanding the summation in (\ref{eq: value linearized}) and considering $c_3$, one can obtain that
\vspace{-4mm}
{\small\begin{align} \label{eq: expanded vk+1}
   & V^{[k+1]}(\ro_i = \ro) = 
    \sum_{\ro_{k+1} \in \Omega} \!\!\! p(\ro_{k+1}) \!\!\!\! \!\!\! \sum_{\ro_{-i}\in \Omega^{k-1}} \!\!\!\!\! p(\ro_{-i} = [\ro_j]_{j \in \mathcal{N}_{-i}} ) \mathbb{E} \{\tau \bg^{[k]} + \zeta | \ro_i , \ro_{-i}, \ro_{k+1} \}. 
\end{align}}
By applying the law of iterated expectations on eq. (\ref{eq: expanded vk+1}) we can express $V^{[k+1]}(\ro_i = \ro)$ as
\vspace{-4mm}
{\small\begin{align}
   & V^{[k+1]}(\ro_i = \ro) = 
   & \!\!\!\! \!\!\! \sum_{\ro_{-i}\in \Omega^{k-1}} \!\!\!\!\! p(\ro_{-i} = [\ro_j]_{j \in \mathcal{N}_{-i}} ) \mathbb{E} \{\tau \bg^{[k]} + \zeta | \ro_i , \ro_{-i} \}, \notag
\end{align}}
which allows us to directly imply
\vspace{-4mm}
{\small\begin{align} \label{eq: relationship of vk+1 and vk}
    V^{[k+1]}(\ro_i = \ro) = \tau V^{[k]}(\ro_i = \ro) + \zeta.
\end{align}}
Upon the correctness of the statement $P(k)$, and for the constant values of $\tau, \zeta \in \mathbb{R}$ one can also state that
\vspace{-6mm}
{\small \begin{align} \label{eq: induction: hypothesis prime}
& 
       \text{ If } V^{[2]}(\ro') > V^{[2]}(\ro) \implies \tau V^{[k]}(\ro') + \zeta > \tau V^{[k]}(\ro) + \zeta, ~ \forall \ro \in \Omega.
\end{align}}
Eq. (\ref{eq: relationship of vk+1 and vk}) together with eq. (\ref{eq: induction: hypothesis prime}) are sufficient to establish the proof of the induction step, which in turn concludes the proof. 
\end{proof}
\end{document}